\documentclass[10pt, doublecolumn]{IEEEtran}
\usepackage{epsfig,latexsym}
\usepackage{float}
\usepackage{indentfirst}
\usepackage{amsmath}
\usepackage{bm}
\usepackage{amssymb}
\usepackage{times}
\usepackage{enumitem}
\usepackage{cite}
\usepackage{lastpage}
\usepackage{color} 
\usepackage{amsthm}
\usepackage{array}
\usepackage{algorithm}
\usepackage{algorithmic}

\usepackage{multirow}
\usepackage{empheq}
\usepackage{subfigure}
\usepackage{tikz}
\usetikzlibrary{arrows.meta,calc,positioning}
\newtheorem{proposition}{Proposition}

\newtheorem{corollary}{Corollary}

\begin{document}
\title{Access Protocols for Segmented Waveguide-Enabled Pinching-Antenna Systems (SWANs)}
\author{Shan Shan, Chongjun Ouyang, Petar Popovski, and Yuanwei Liu 
\thanks{Shan Shan is with the School of Information and Communication Engineering, Beijing University of Posts and Telecommunications, Beijing 100876, China (email: shan.shan@bupt.edu.cn). Chongjun Ouyang is with the School of Electronic Engineering and Computer Science, Queen Mary University of London, London E1 4NS, U.K. (email: c.ouyang@qmul.ac.uk). Petar Popovski is with the Connectivity Section, Department of Electronic Systems, Aalborg University, 9220 Aalborg, Denmark (e-mail: petarp@es.aau.dk). Yuanwei Liu is with the Department of Electrical and Computer Engineering, The University of Hong Kong, Hong Kong (email: yuanwei@hku.hk).}}
\IEEEaftertitletext{\vspace{-3em}}
\maketitle
\begin{abstract}
This paper {proposes an access protocol framework for} segmented waveguide-enabled pinching-antenna systems (SWANs), which exploits SWAN-induced reconfigurable channel diversity as a protocol-level resource for uplink random access. 
{The framework consists of two stages, a channel-oracle stage and an access stage, designed under three SWAN operating modes: \emph{(i)} one-segment selection (OS), \emph{(ii)} segment aggregation (SA), and \emph{(iii)} segment multiplexing (SM).} 
Specifically, in the channel-oracle stage, the OS mode is adopted to acquire sparse pilot observations and infer the channel responses across the SWAN configuration space. In this way, high-dimensional uplink channel acquisition is recast as a low-dimensional geometric localization problem, thereby reducing pilot overhead while preserving channel reconstruction accuracy.
{For} the access stage, we construct two oracle-guided access codebooks under the SA and SM modes, respectively, which {address the} tradeoff between hardware complexity and multiuser access resolution.
In particular, the SA-based scheme supports single radio frequency (RF) chain access through randomized segment-group activation, whereas the SM-based $R$-access scheme exploits multiple RF chains to construct deterministic access slots and enhance collision resolution.
Finally, our numerical results demonstrate that \emph{(i)} the proposed two-stage framework improves access performance under the same training overhead, \emph{(ii)} anchor densification is more effective than aggressive segment aggregation for SA, and \emph{(iii)} SM-based $R$-access achieves deterministic coverage and higher throughput in moderate- and high-load regimes, whereas SA-based access remains attractive for low-complexity implementations.
\end{abstract}

\begin{IEEEkeywords}
Channel oracle, random access protocol, segmented waveguide-enabled pinching-antenna systems.
\end{IEEEkeywords}

\section{Introduction}
The evolution toward sixth-generation (6G) wireless networks is expected to support massive connectivity with high service reliability for emerging applications such as massive machine-type communications (mMTC) and industrial Internet of Things (IIoT)~\cite{8869705,10109667,10745245,9795904, chen2025channel}. This requirement makes random access indispensable for scalable uplink connectivity. It enables sporadically active and uncoordinated devices to initiate transmissions before explicit scheduling by the access point (AP) over shared wireless resources, such as time slots, frequency bandwidth, and preamble sequences~\cite{popovski2020wireless,7565189,4385793}. However, in conventional random access protocols, the wireless propagation environment is usually regarded as determined and uncontrollable. 
As a result, the access opportunity of each active user is strongly influenced by its static geometric relationship with the AP~\cite{chen2023energy}. Specifically, users with advantageous link conditions tend to experience persistently stronger access channels, whereas edge or geometrically blockaged users are subject to reduced access success probability. Such persistent link imbalance fundamentally limits the reliability and fairness of random access under massive connectivity.
% Such link imbalance becomes a major bottleneck for effective random access under massive connectivity.

This limitation raises the question of whether the wireless propagation environment itself can be actively controlled and exploited for access protocol design beyond conventional wireless resources. Motivated by this, we consider pinching-antenna systems (PASS), a representative flexible-antenna architecture, to enable reconfigurable propagation behavior in wireless infrastructures~\cite{10945421, 11364174}. Specifically, PASS employs low-loss dielectric waveguides to deliver electromagnetic signals, along which multiple pinching antennas (PAs) can be activated as reconfigurable radiation or collection elements. By activating different PA locations along the deployed waveguide structure, PASS can effectively adjust the wireless emission or collection point, thereby establishing stable line-of-sight (LoS) links for different users even in obstructed environments. This is particularly beneficial for random access, since PA reconfiguration alleviates the persistent link imbalance of fixed-antenna systems and recasts PA configuration-dependent propagation variations to serve as an additional access resource for sporadically active users.

\subsection{Related Works}
% Recent studies have validated the effectiveness of PASS in improving physical-layer performance. For example, the pioneering theoretical study in~\cite{10945421} analyzed the fundamental transmission characteristics of PASS. 
Recent studies have validated the effectiveness of PASS in improving physical-layer performance. The pioneering theoretical study in~\cite{10945421} analyzed the fundamental transmission characteristics of PASS, while subsequent works studied downlink rate maximization~\cite{10896748} and uplink minimum-rate maximization~\cite{10909665}. More general modeling and beamforming optimization frameworks were further developed in~\cite{11202577,11263923,11223640,11300296}, which demonstrated the benefits of jointly exploiting reconfigurable PA locations and waveguide-domain propagation for physical-layer transmission design.

To further exploit its programmable propagation capability, several PASS architectures have been investigated. For example, segmented waveguide-enabled PASS (SWAN) divides a long waveguide into multiple short segments with dedicated feed points, which helps avoid inter-antenna re-radiation (IAR) and provides flexible segment-level control~\cite{11348983,jiang2026swan}. Center-fed PASS (C-PASS) introduces more flexible signal routing through waveguide power splitting and direction switching~\cite{xu2026cpass,gan2026cpass}, while multi-mode multi-feed PASS (M-PASS) exploits multi-mode signal propagation to facilitate multiuser multiplexing over waveguides~\cite{xiaoxia2026mpass}. In addition, tri-hybrid PASS architectures have been developed by employing multiple waveguides and jointly exploiting waveguide selection, PA activation, and beamforming design, thereby further expanding the configurable propagation space~\cite{cheng2025trihybrid,zhao2025fullyconnected}. These architectural studies show that PASS is a configurable propagation paradigm with different levels of spatial controllability, hardware complexity, and protocol compatibility.

Beyond performance analysis and beamforming optimization, PASS has also been investigated for physical-layer multiple access and multiuser interference management. Existing studies first incorporated conventional multiple access strategies into PASS-enabled multiuser transmission. For example, PASS-based non-orthogonal multiple access (NOMA) and orthogonal multiple access (OMA) designs were studied in~\cite{yanyu2025omanoma,ren2025passma}, where PA locations and resource allocation were jointly optimized for coordinated multiuser communication. More recently, PASS-specific multiple access schemes have been proposed to further exploit the waveguide-domain and propagation-reconfiguration capabilities of PASS. Specifically, the authors of~\cite{11435305} proposed waveguide-division multiple access (WDMA), where dedicated waveguides are assigned to different users and pinching beamforming is exploited for desired-signal enhancement and interference mitigation. In~\cite{zhiguo2025edma}, environment-division multiple access (EDMA) utilized PA reconfiguration to reshape line-of-sight (LoS) links and large-scale path losses for interference control. Furthermore, PASS-enabled multiple access for multicast transmission was further studied in~\cite{shan2026multigroupma,11442795}, where waveguide-domain transmission structures and PA placement were jointly designed to improve multiuser fairness. These studies demonstrate the potential of PASS for coordinated multiuser transmission by exploiting PA placement, waveguide resources, and propagation reconfiguration. Nevertheless, they mainly focus on scheduled multiple access, where the active users are known and PA configurations, power allocation, beamforming, or access resources can be centrally optimized based on available user and channel information.
\vspace{-5pt}
\subsection{Motivation and Contributions}
Less attention has been paid to how PASS should be integrated into higher-layer protocols, notably the access protocols for shared medium. Random access is relevant for uplink PASS communications from uncoordinated users with sporadic activity, whose channels are not known a priori. The key challenge is therefore how to exploit the configurable PASS propagation space to coordinate contention-based uplink access efficiently. This gives rise to multiple specific challenges: utilize  the channel variations induced by different PA configurations, optimize training overhead, user  access-slot selection, as well as balance between access reliability and implementation complexity.

To study this problem under a physically consistent and practically implementable architecture, this article focuses on SWANs. In this architecture, a long waveguide is divided into multiple short dielectric segments with dedicated feed points, and one PA can be activated on each segment during uplink reception. 
SWAN naturally supports three modes of hardware complexity and access controllability: one-segment selection (OS), segment aggregation (SA), and segment multiplexing (SM). These features make SWAN a particularly suitable architecture for investigating uplink random access over PASS.
Differently from the traditional problem of contention, the random access protocol design for SWAN cannot be separated from channel acquisition. This is because the access decisions are tightly coupled with the segmented architecture, the operating mode, and the user-dependent channel responses induced by different segment-pinch configurations. Sampling too many configurations would incur prohibitive overhead, whereas relying on overly coarse access rules would fail to exploit the structured channel variation enabled by SWAN. Instead, one should extract sufficient architecture-aware channel knowledge from sparse observations and then use it to guide access over a large configuration space with addressable complexity. 

% TODO Contributions
This article develops a two-stage oracle-assisted random access framework aligned with the three canonical SWAN operating modes. In the \emph{oracle stage}, the OS mode is employed to acquire sparse pilot observations and to learn a predictive channel oracle, which infers the channel responses over the entire SWAN configuration space without requiring exhaustive sampling or perfect a priori CSI. In the \emph{access stage}, {the reconstructed channel information is then used to support either single-radio frequency (RF) SA-based access, or SM-based $R$-access, a multi-RF option with stronger controllability.} In this way, oracle learning and access scheduling are coupled through the same SWAN geometry, while the two access architectures expose explicit tradeoffs among training overhead, access reliability, and hardware complexity.

The main contributions of this paper are summarized as follows. {\emph{(i)} We introduce a principled mechanism for synthesizing the complete channel responses from sparse pilot observations rather than from exhaustive per-configuration training. This is based on an  OS-based channel-oracle module for SWAN, in which the high-dimensional configuration-dependent channel acquisition problem is recast as a low-dimensional geometric parameter estimation problem. \emph{(ii)} We develop and analyze an SA-based random-access scheme for the single-RF architecture, including a group-based access codebook. This translates oracle knowledge into statistical access decisions under analog coherent aggregation. \emph{(iii)} We design an SM-based $R$-access scheme for the multi-RF architecture, in which oracle knowledge is converted into deterministic slot-wise access decisions. A geometry-aware lower bound is derived to guarantee that each user has at least one feasible access slot under the proposed access codebook.}
{Our numerical results validate the proposed oracle and access designs and to quantify the architectural tradeoff between SA-based access and $R$-access. The results show that geometry-aware sparse sample improves oracle quality, that moderate SA grouping is preferable, and that while $R$-access achieves higher throughput and deterministic coverage, SA remains attractive under a constrained RF chain budget.} 
\vspace{-5pt}
\subsection{Organization and Notation}
The rest of this paper is organized as follows. Section~\ref{sec_SWAN_protocols} introduces the SWAN architecture, the SWAN-based channel model, and the basic operating modes. Section~\ref{sec_SWAN_SS_oracle} presents the OS-based channel-oracle module. Section~\ref{sec_SWAN_access} develops the SA-based access module, while Section~\ref{sec_partialSM_access} presents the SM-based $R$-access module. Section~\ref{sec_numerical_results} provides numerical results. Finally, Section~\ref{sec_conclusion} concludes the paper.

\subsubsection*{Notations}
Scalars, vectors and matrices are represented by italic letters, bold lowercase letters and bold uppercase letters, respectively. For a vector ${\mathbf x}$, $[{\mathbf x}]_i$, ${\mathbf x}^{\mathsf T}$, ${\mathbf x}^{\mathsf H}$, and $\|{\mathbf x}\|_2$ denote its $i$th element, transpose, Hermitian transpose, and Euclidean norm, respectively. The cardinality of a set $\mathcal A$ is denoted by $|\mathcal A|$. The statistical expectation, probability, variance, covariance, and real-part operators are denoted by $\mathbb E\{\cdot\}$, $\Pr(\cdot)$, $\mathrm{Var}(\cdot)$, $\mathrm{Cov}(\cdot)$, and $\Re\{\cdot\}$, respectively. ${\mathcal C}{\mathcal N}(\boldsymbol\mu,\mathbf C)$ denotes a circularly symmetric complex Gaussian distribution with mean $\boldsymbol\mu$ and covariance matrix $\mathbf C$, while $\mathrm{Binomial}(N,p)$ and $\mathrm{Poisson}(\lambda)$ denote the binomial and Poisson distributions, respectively. The symbols ${\mathbf 0}$ and ${\mathbf I}$ denote the all-zero vector and identity matrix, respectively. The floor and ceiling operators are denoted by $\lfloor\cdot\rfloor$ and $\lceil\cdot\rceil$, respectively. Finally, ${\mathbf 1}\{\cdot\}$ denotes the indicator function.

\section{System Model}\label{sec_SWAN_protocols}
{This section introduces the SWAN-assisted uplink random-access model considered in this paper. We consider one access point (AP) equipped with a SWAN front-end and multiple single-antenna UEs with sporadic uplink traffic. At the beginning of a protocol period, the AP does not have a priori knowledge of the active UEs' locations or PA configuration-dependent channel coefficients. The AP controls the segment switches, feed connections, and PA configurations, whose control signaling is assumed not to interfere with the downlink pilot or uplink packet transmissions. }
% The time required to load a new SWAN configuration is captured by the switching overhead $T_{\mathrm{sw}}$ introduced later in the protocol timing model.

\subsection{SWAN Architecture}\label{subsec_SWAN_architecture}
\begin{figure}[!t]
\centering
\includegraphics[width=0.5\textwidth]{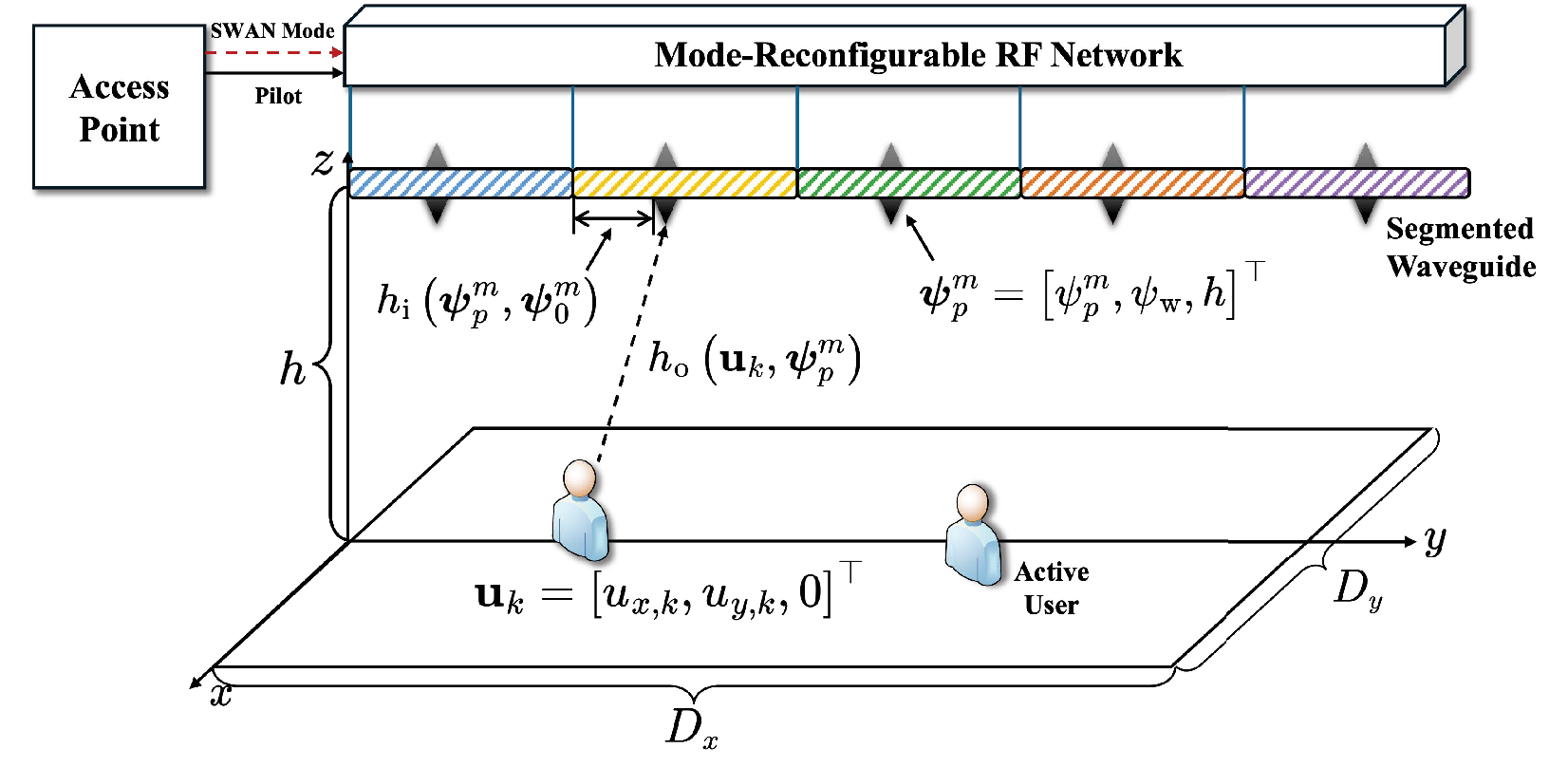}\vspace{-5pt}
\caption{Illustration of the proposed SWAN architecture.}
\label{Figure_PAS_System_Model}\vspace{-10pt}
\end{figure}

\begin{figure*}[!t]
\centering
\includegraphics[width=0.95\textwidth]{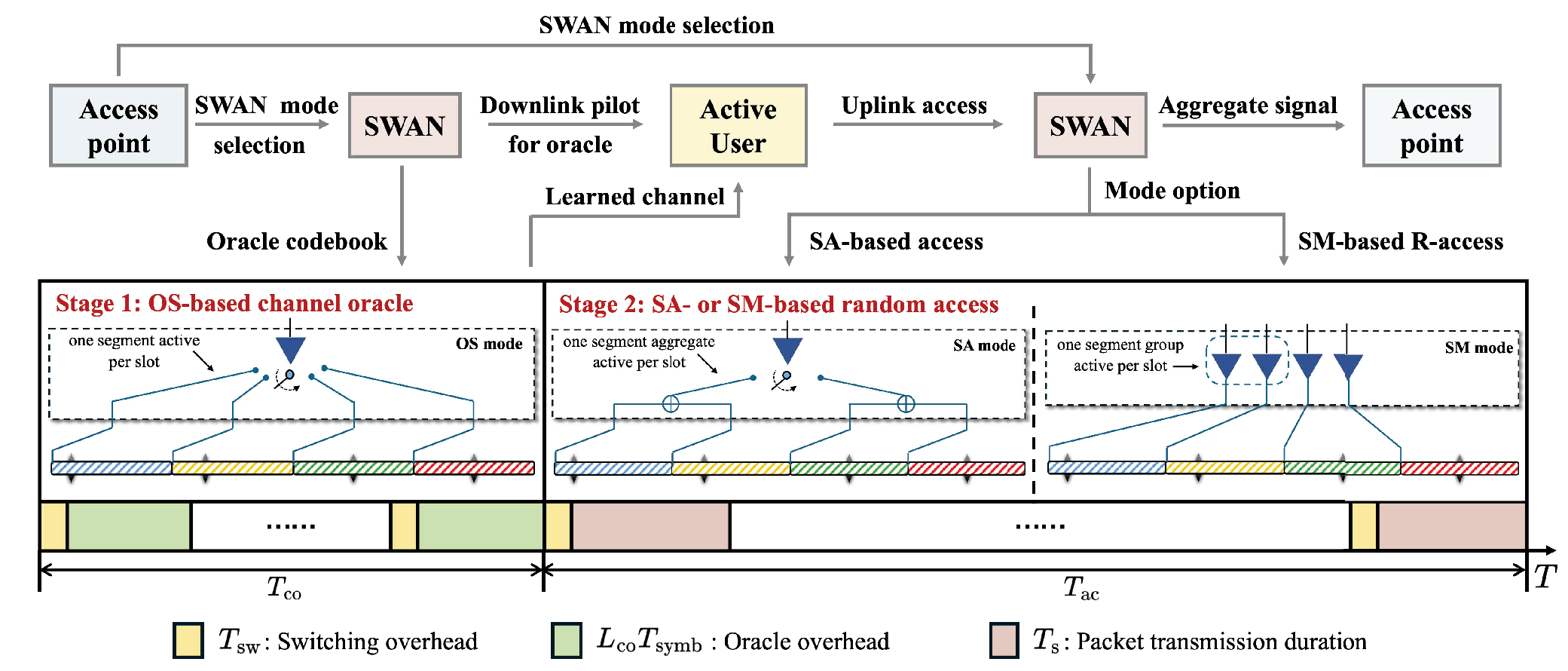}
\caption{Illustration of the proposed two-stage oracle-assisted random access protocol for SWAN. In Stage~1, the OS-based channel oracle uses downlink pilots to learn the configuration-dependent channels. In Stage~2, the learned channel information guides uplink random access under either SA-based access or SM-based $R$-access. The timeline shows the switching overhead $T_{\rm sw}$, oracle pilot overhead $L_{\rm co}T_{\rm symb}$, and packet transmission duration $T_{\rm s}$.}
\label{PASS_RA}
% \vspace{-10pt}
\end{figure*}

{% The region $\mathcal U$ is assumed to be known at the AP and serves as the geometric domain for channel oracle learning and access design. 
As illustrated in Fig.~\ref{Figure_PAS_System_Model}, the SWAN architecture consists of $M$ dielectric waveguide segments arranged along the $x$-axis, each of length $L$ such that $D_x=ML$. Unlike conventional PASS with a single continuous waveguide, the $M$ segments are not physically interconnected. Instead, each segment has an independent feed point and is connected to the AP RF front-end through low-loss wired links, whose loss is neglected.}
{Let $\mathcal M\triangleq\{1,\ldots,M\}$ denote the segment index set. The feed point of the $m$th segment, $m\in\mathcal M$, is located at $\boldsymbol{\psi}_0^m \triangleq [\psi_0^m,\psi_{\mathrm w},h]^{\mathsf T}$, where $\psi_0^1<\cdots<\psi_0^M$, $\psi_{\mathrm w}$ is the fixed $y$-coordinate of the waveguide, and $h$ is the deployment height. Within the $m$th segment, the AP can activate one radiating/receiving point from a discrete set of $P$ candidate PA locations. Let $\mathcal P_m\triangleq\{1,\ldots,P\}$ denote the corresponding index set. The location of the $p$th PA, $p\in\mathcal P_m$, is given by}
\begin{align}
{\bm\psi}_p^m \triangleq [\psi_p^m,\psi_{\mathrm w},h]^{\mathsf T},\qquad \psi_0^m\le \psi_p^m \le \psi_0^m+L .
\end{align}

{In the uplink of SWAN, a signal captured by one PA may propagate along the waveguide segment toward the feed point and be re-radiated into free space through another activated PA on the same segment, which is referred to as the inter-antenna re-radiation (IAR) effect~\cite{11348983}. Since IAR substantially complicates the uplink signal model, we assume that only a single PA is activated per segment. Accordingly, an elementary PA configuration is specified by}
\begin{align}\label{eq_config_pair}
c\triangleq (m,p),\qquad m\in\mathcal M,\ p\in\mathcal P_m,
\end{align}
{and the corresponding configuration space is defined as}
\begin{align}\label{eq_config_space}
\mathcal C \triangleq \{(m,p):m\in\mathcal M,\ p\in\mathcal P_m\},\qquad |\mathcal C|=MP .
\end{align}
{For multi-segment operating modes, a slot-level SWAN configuration is obtained by assigning one such elementary configuration to each activated segment according to the selected operating mode.}
\vspace{-10pt}
\subsection{SWAN-Based Channel Model}\label{subsec_SWAN_channel_model}
{As illustrated in Fig.~\ref{Figure_PAS_System_Model}, the UEs are distributed in a rectangular service region of size $D_x\times D_y$ on the $xy$-plane. The location of the $k$th UE is denoted by}
\begin{align}
{\mathbf u}_k=[u_{x,k},u_{y,k},0]^{\mathsf T}\in\mathcal U\triangleq [0,D_x]\times[0,D_y].
\end{align}
{We adopt a free-space line-of-sight (LoS) model for the radiated UE-PA link, which is appropriate for the high-frequency SWAN deployment and provides a geometry-explicit basis for oracle learning. The uplink LoS coefficient between the $k$th UE and the PA located at ${\bm\psi}_p^m$ can be written as follows:}
\begin{align}\label{eq_ho_def}
h_{\mathrm o}({\mathbf u}_k,{\bm\psi}_p^m)
\triangleq
\frac{\eta^{\frac{1}{2}}{\rm e}^{-{\rm j} k_0\|{\mathbf u}_k-{\bm\psi}_p^m\|}}{\|{\mathbf u}_k-{\bm\psi}_p^m\|},
\end{align}
where $\eta=\frac{c^2}{16\pi^2 f_{\mathrm c}^2}$ and $k_0=2\pi/\lambda$. The in-segment propagation coefficient from the PA located at ${\bm\psi}_p^m$ to the feed point ${\bm\psi}_0^m$ is given by
\begin{align}\label{eq_hi_def}
h_{\mathrm i}({\bm\psi}_p^m,{\bm\psi}_0^m)
\triangleq
10^{-\frac{\kappa}{20}|\psi_p^m-\psi_0^m|}
{\rm e}^{-{\rm j}\frac{2\pi}{\lambda_{\mathrm g}}|\psi_p^m-\psi_0^m|},
\end{align}
{where $\kappa$ is the attenuation factor (dB/m) and $\lambda_{\mathrm g}$ is the guided wavelength. Under the one-PA-per-segment activation priciple, the segment-wise effective uplink channel coefficient under PA configuration $(m,p)$ can be written as follows:}
% The geometry-dependent LoS model serves as the parametric basis of the subsequent oracle design. Thus, the oracle observations are assumed to remain compatible with \eqref{eq_ho_def}, while fully blocked segment-UE links are treated as unavailable or weak-signal observations. 
\begin{align}\label{eq_zetaUL_def}
\zeta_k^{\mathrm{ul}}(m,p) \triangleq h_{\mathrm i}({\bm\psi}_p^m,{\bm\psi}_0^m)\, h_{\mathrm o}({\mathbf u}_k,{\bm\psi}_p^m)\in\mathbb C.
\end{align}
{By channel reciprocity, the corresponding downlink coefficient used for oracle pilot transmission is given by}
\begin{align}\label{eq_zetaDL_def}
\zeta_k^{\mathrm{dl}}(m,p)=\zeta_k^{\mathrm{ul}}(m,p).
\end{align}

\subsection{SWAN Operating Modes}\label{subsec_SWAN_three_protocols}
% \begin{figure}[!t]
% \centering
% \subfigure[One-segment selection (OS) for oracle module.]{
%     \includegraphics[width=0.4\textwidth]{figure/System_model_OS.pdf}
%     \label{Figure_PAN_Protocol1}
% }\\[4pt]
% \subfigure[Segment aggregation (SA) for SA-based access module.]{
%     \includegraphics[width=0.4\textwidth]{figure/System_model_SA.pdf}
%     \label{Figure_PAN_Protocol2}
% }\\[4pt]
% \subfigure[Segment multiplexing (SM) for $R$-access module.]{
%     \includegraphics[width=0.4\textwidth]{figure/System_model_SM.pdf}
%     \label{Figure_PAN_Protocol3}
% }
% \caption{Illustration of the three canonical SWAN operating modes in the proposed random access protocol.}
% \label{Figure_PAN_Protocol}
% \vspace{-15pt}
% \end{figure}
{The controllability and implementation complexity of SWAN depend on how the segment feed points are connected to the RF front-end and how the elementary PA configurations are scheduled across segments. We consider three canonical operating modes, namely OS, SA, and SM, which will later be used by different stages of the proposed access protocol.}

\subsubsection{One-Segment Selection (OS)}\label{subsubsec_SS}

In OS mode, the AP connects only one segment to a single RF chain in a given slot. Let $\bar m\in\mathcal M$ be the selected segment and let $p_{\bar m}\in\mathcal P_{\bar m}$ be its activated PA location. The effective uplink channel then reduces to the single coefficient $\zeta_k^{\mathrm{ul}}(\bar m,p_{\bar m})$, and the received signal at the AP can be written as follows:
\begin{align}\label{eq_SS_ul}
y_k^{\mathrm{ul}} = \zeta_k^{\mathrm{ul}}(\bar m,p_{\bar m})\sqrt{\rho_k}s_k+n^{\mathrm{ul}},
\end{align}
where $s_k\sim\mathcal{CN}(0,1)$ is the normalized transmit symbol, $\rho_k$ is the transmit power of the $k$th UE, and $n^{\mathrm{ul}}\sim\mathcal{CN}(0,\sigma^2)$. OS has the lowest hardware complexity, since it only requires a single RF chain and a switching mechanism.

\subsubsection{Segment Aggregation (SA)}\label{subsubsec_SA}

In SA mode, the AP connects a selected group of segments to a single RF chain through an analog combiner. Let $S$ denote the nominal aggregation size and define $G\triangleq\lceil M/S\rceil$ as the number of SA groups. With $\mathcal G\triangleq\{1,\ldots,G\}$, the segment index set of the $g$th group is defined as follows:
\begin{align}\label{eq_group_def}
\mathcal S_g\triangleq\{(g-1)S+1,\ldots,\min(gS,M)\},\quad g\in\mathcal G.
\end{align}
In a slot assigned to the $g$th group, only the segments in $\mathcal S_g$ are activated and connected to the analog combiner, while all other segments remain inactive. Let $p_m\in\mathcal P_m$ denote the activated PA on the $m$th segment, $m\in\mathcal S_g$, and let ${\mathbf p}_g\triangleq[p_m]_{m\in\mathcal S_g}$ collect the group-wise PA configuration. The effective channel of the $k$th UE in this slot is
\begin{align}\label{eq_SA_g_def}
g_{k,g}({\mathbf p}_g)\triangleq\sum_{m\in\mathcal S_g}\zeta_k^{\mathrm{ul}}(m,p_m).
\end{align}
Accordingly, the uplink received signal is given by
\begin{align}\label{eq_SA_ul}
y_{k,g}^{\mathrm{ul}}
=
\sqrt{\rho_k}\,g_{k,g}({\mathbf p}_g)s_k+n_g^{\mathrm{ul}},
\end{align}
where
\begin{align}
n_g^{\mathrm{ul}}\triangleq \sum_{m\in\mathcal S_g}n_m^{\mathrm{ul}}\sim\mathcal{CN}(0,|\mathcal S_g|\sigma^2).
\end{align}
SA retains a single-RF architecture while allowing group-wise analog aggregation across multiple segments.
% and therefore offers a moderate-complexity operating point between SS and SM.

\subsubsection{Segment Multiplexing (SM)}\label{subsubsec_SM}

In SM mode, each active segment is connected to a dedicated RF chain, which enables digital processing across the activated segments. Let $\mathcal S\subseteq\mathcal M$ denote the active segment set in a generic slot, with $R_{\mathcal S}=|\mathcal S|$. For the $m$th active segment, $m\in\mathcal S$, let $p_m\in\mathcal P_m$ denote the activated PA, and let ${\mathbf p}_{\mathcal S}\triangleq[p_m]_{m\in\mathcal S}$ collect the corresponding PA configuration. The effective uplink channel vector ${\mathbf h}_{k,\mathcal S}^{\mathrm{ul}}({\mathbf p}_{\mathcal S})\in\mathbb C^{R_{\mathcal S}\times 1}$ is defined element-wise as follows:
\begin{align}\label{eq_SM_h}
\left[{\mathbf h}_{k,\mathcal S}^{\mathrm{ul}}({\mathbf p}_{\mathcal S})\right]_r
\triangleq \zeta_k^{\mathrm{ul}}(m_r,p_{m_r}),\quad r=1,\ldots,R_{\mathcal S},
\end{align}
where $\{m_1,\ldots,m_{R_{\mathcal S}}\}=\mathcal S$ follows a fixed ordering.
The corresponding uplink received signal is
\begin{align}\label{eq_SM_ul}
{\mathbf y}_{k,\mathcal S}^{\mathrm{ul}}
=
{\mathbf h}_{k,\mathcal S}^{\mathrm{ul}}({\mathbf p}_{\mathcal S})\sqrt{\rho_k}s_k+{\mathbf n}_{\mathcal S}^{\mathrm{ul}},
\end{align}
where ${\mathbf n}_{\mathcal S}^{\mathrm{ul}}\sim\mathcal{CN}({\mathbf 0},\sigma^2{\mathbf I}_{R_{\mathcal S}})$. SM provides the highest degree of slot-wise controllability, but it also incurs higher RF chain and baseband complexity.

{Overall, SWAN offers two key advantages for uplink random access. First, its segmented implementation suppresses the IAR effect and leads to a tractable uplink signal model. Second, slot-dependent activation of different segments or segment groups over the long waveguide aperture induces distinct configuration-dependent channel responses, which reduces the correlation among channel oracle and access codewords and thereby alleviates collision probability.}
\vspace{-5pt}
\subsection{Proposed SWAN-enabled Access Protocol}\label{subsec_PASS_Access_framework}
{We now present the proposed SWAN-enabled random-access protocol. The protocol comprises two coupled modules, namely a channel-oracle module and an access module. Its main principle is to use the controllable SWAN configuration space as a protocol-level resource. Specifically, the AP activates a subset of SWAN configurations, while each UE learns the configuration-dependent channel variations and uses the resulting information to evaluate candidate access slots.}

{The protocol is slot-synchronous. At the beginning of each oracle or access slot, the AP applies one prescribed SWAN configuration from a pre-broadcast codebook, which incurs a switching interval of duration $T_{\mathrm{sw}}$. Since SWAN configuration control is centralized at the AP whereas channel observations are acquired locally at the UEs, the AP determines the configuration schedule, and each UE uses its local observations for channel prediction and access-slot evaluation.}

\subsubsection{Channel Oracle}
{The channel-oracle module acquires the configuration-dependent channel information required for access with limited overhead. The AP operates SWAN in the OS mode and broadcasts pilots over a sparse set of oracle configurations. From these observations, each UE estimates a low-dimensional geometric state and reconstructs the channel responses over $\mathcal C$. The learned channel responses are then used by the access module to evaluate candidate access slots.}

\subsubsection{Access}
{The access module uses the learned channel responses to coordinate uplink transmissions over $N_{\mathrm{ac}}$ access slots. In conventional random access, UEs contend for slots without knowing how the slot-dependent propagation configuration affects their received signal quality. In contrast, the proposed protocol lets each active UE evaluate the candidate slots using its learned channel responses and then perform oracle-guided access. Without the oracle output, this selective slot access would be unavailable, and the protocol would reduce to conventional contention without exploiting SWAN reconfigurability.}

{This two-module structure links AP-controlled propagation, UE-side channel prediction, and uplink access decision making. We next define the performance metrics used to quantify both the access gain and the overhead required to acquire the oracle.}

\subsubsection{Performance Metrics}\label{subsubsec_performance_metrics}

{We consider one protocol period composed of a channel-oracle stage followed by an access stage. The access stage contains the previously introduced $N_{\mathrm{ac}}$ access slots, and each SWAN configuration update incurs the switching interval $T_{\mathrm{sw}}$. Let $N_{\mathrm{co}}$ denote the number of oracle slots, $T_{\mathrm s}$ the useful transmission duration of each access slot, $L_{\mathrm{co}}$ the pilot length of each oracle slot, and $T_{\mathrm{symb}}$ the symbol duration. The duration of the access stage is given by}
\begin{align}\label{eq_Tac_def}
T_{\mathrm{ac}}=N_{\mathrm{ac}}\big(T_{\mathrm s}+T_{\mathrm{sw}}\big),
\end{align}
{whereas that of the oracle stage is}
\begin{align}\label{eq_Tco_metric}
T_{\mathrm{co}}=N_{\mathrm{co}}\big(L_{\mathrm{co}}T_{\mathrm{symb}}+T_{\mathrm{sw}}\big).
\end{align}
{To account for the effective cost of oracle acquisition, we introduce a tunable oracle-penalty factor $\alpha\ge 0$ and define the overall protocol duration as follows:}
\begin{align}\label{eq_Tperiod_def}
T_{\mathrm p}=T_{\mathrm{ac}}+\alpha T_{\mathrm{co}}.
\end{align}

{Let $K$ denote the number of active UEs attempting access in one protocol period, and let $K_{\mathrm a}$ denote the number of UEs that successfully complete access. The expected access success probability is defined as follows:}
\begin{align}\label{eq_Pac_bar_def}
\bar P_{\mathrm{ac}}\triangleq\frac{\mathbb E[K_{\mathrm a}]}{K}.
\end{align}
{Furthermore, assuming that each successfully accessed UE contributes an effective payload duration of $T_{\mathrm F}$, we define the expected access-stage throughput as follows:}
\begin{align}\label{eq_TP_ac_def}
\mathrm{TP}_{\mathrm{ac}} \triangleq \frac{\mathbb E[K_{\mathrm a}]T_{\mathrm F}}{T_{\mathrm{ac}}} = \frac{\mathbb E[K_{\mathrm a}]T_{\mathrm F}}{N_{\mathrm{ac}}\big(T_{\mathrm s}+T_{\mathrm{sw}}\big)},
\end{align}
{and the expected overall throughput as follows:}
\begin{align}\label{eq_TP_def}
\mathrm{TP} \!\triangleq\! \frac{\mathbb E[K_{\mathrm a}]T_{\mathrm F}}{T_{\mathrm p}} \!=\! \frac{\mathbb E[K_{\mathrm a}]T_{\mathrm F}}{N_{\mathrm{ac}}\big(T_{\mathrm s}\!+\!T_{\mathrm{sw}}\big)\!+\!\alpha N_{\mathrm{co}}\big(L_{\mathrm{co}}T_{\mathrm{symb}}\!+\!T_{\mathrm{sw}}\big)}.
\end{align}
{Here, $\mathrm{TP}_{\mathrm{ac}}$ isolates the efficiency of the access stage itself, whereas $\mathrm{TP}$ further captures the tradeoff between the oracle gain and the associated oracle overhead. Therefore, when the oracle stage is fixed and identical across the compared schemes, we adopt $\mathrm{TP}_{\mathrm{ac}}$ to focus on the access module behavior. Otherwise, we adopt $\mathrm{TP}$ to evaluate the overall protocol efficiency.}

\section{The Channel Oracle Module}\label{sec_SWAN_SS_oracle}
This section details the channel oracle module utilizing the OS mode. The AP transmits pilots over a reduced set of OS configurations. By leveraging these limited observations, each UE can predict the complete uplink channel variations across all candidate PA configurations.
% including those configurations that are not explicitly sampled.

\subsection{Channel Oracle Codebook Formulation}\label{subsec_oracle_formulation}
{Under the slot-synchronous protocol described in Section~\ref{sec_SWAN_protocols}, the AP updates one OS configuration at the beginning of each oracle slot, and pilot observations are collected after the switching interval. We further assume block fading within the oracle period: for the $k$th UE, the channel response $\zeta_k^{\mathrm{dl}}(m,p)$ remains constant within each slot and is quasi-static over the $N_{\mathrm{co}}$ oracle slots\footnote{This standard training assumption can be relaxed by shortening the oracle period or by incorporating parameter-tracking mechanisms.}.}
{Thus, the collected pilots can be viewed as sparse samples of the same configuration-dependent channel variation. Since the oracle codebook does not exhaust all PA configurations, the unobserved responses must be reconstructed from the structural relation among PA locations. Under the SWAN-based LoS channel model, this relation is determined by the UE location and the PA coordinate along each segment. }

{During the oracle stage, the OS mode activates one discrete PA configuration $(m,p)$ per slot, where the physical PA coordinate is $\psi_p^m$. Since the UEs lie on the $xy$-plane, the horizontal coordinate of the $k$th UE is denoted by $\boldsymbol\theta_k\triangleq[u_{x,k},u_{y,k}]^{\mathsf T}$ for notational simplicity.} Based on~\eqref{eq_ho_def} and~\eqref{eq_hi_def}, the downlink channel coefficient under PA configuration $(m,p)$ is parameterized as follows:
\begin{align}\label{eq_zeta_parametric_config}
\zeta_k^{\mathrm{dl}}(m,p;\boldsymbol\theta_k)
=h_{\mathrm i}({\bm\psi}_p^m,{\bm\psi}_0^m)\,h_{\mathrm o}(\boldsymbol\theta_k,\psi_p^m).
\end{align}
Here, the free-space channel propagation component $h_{\mathrm o}(\boldsymbol\theta_k,\psi_p^m)$ depends solely on the geometric distance
\begin{subequations}
\begin{align}\label{eq_ho_parametric}
h_{\mathrm o}(\boldsymbol\theta_k,\psi_p^m) &= \eta^{\frac12}\frac{{\rm e}^{-{\rm j} k_0 r_{k,m,p}}}{r_{k,m,p}},\\ \label{eq_rkmp_def}
r_{k,m,p} &\triangleq \sqrt{(u_{x,k}-\psi_p^m)^2+(u_{y,k}-\psi_{\mathrm w})^2+h^2}.
\end{align}
\end{subequations}
% Furthermore, the in-waveguide propagation component $h_{\mathrm i}(x,\psi_0^m)$ is expressed as follows:
% \begin{align}\label{eq_hi_parametric}
% h_{\mathrm i}(x,\psi_0^m) = 10^{-\frac{\kappa}{20}|x-\psi_0^m|}{\rm e}^{-{\rm j}\frac{2\pi}{\lambda_{\mathrm g}}|x-\psi_0^m|}.
% \end{align}
{Since $h_{\mathrm i}({\bm\psi}_p^m,{\bm\psi}_0^m)$ is determined by the waveguide structure and is therefore known for all UEs, the oracle does not need to estimate an independent complex coefficient for every PA configuration. Instead, each UE only estimates its own geometric coordinate $\boldsymbol\theta_k$, from which its full configuration-dependent channel response can be analytically reconstructed.}

Motivated by this, we now design the oracle codebook. To minimize the training overhead, the OS-based oracle codebook selects only a small subset of $Q_{\mathrm{co}} \ll P$ sampled PA indices for the $m$th segment. This subset is denoted by
\begin{align}
\mathcal P_m^{\mathrm{co}}\subseteq\mathcal P_m,\qquad |\mathcal P_m^{\mathrm{co}}|=Q_{\mathrm{co}}.
\end{align}
% The corresponding PA coordinates are given by $\{\psi_p^m:p\in\mathcal P_m^{\mathrm{co}}\}$.
To establish a simple benchmark for codebook evaluation, a straightforward approach is to select PA configurations that are evenly distributed along the segment. Specifically, the uniform-index baseline is given by
\begin{align}\label{eq_Pco_uniform_index}
\mathcal{P}_m^{\mathrm{co}} = \left\{1+\left\lfloor\frac{(i-1)(P-1)}{Q_{\mathrm{co}}-1}\right\rfloor:\ i=1,\ldots,Q_{\mathrm{co}}\right\}.
\end{align}
Following this segment-wise selection, the global oracle codebook across all $M$ segments is formulated as follows:
\begin{align}\label{eq_Phi_co_def}
\Phi_{\mathrm{co}} \triangleq \{(m,p):m\in\mathcal M,\ p\in\mathcal P_m^{\mathrm{co}}\},
\end{align}
yielding a total training overhead of
\begin{align}\label{eq_Nco_MQ}
N_{\mathrm{co}}=|\Phi_{\mathrm{co}}|=\sum\nolimits_{m=1}^{M}|\mathcal P_m^{\mathrm{co}}|=MQ_{\mathrm{co}}.
\end{align}
While the uniform subset in \eqref{eq_Pco_uniform_index} serves as an intuitive method, it does not guarantee optimal oracle performance. We next strategically design these $Q_{\mathrm{co}}$ sampled PA configurations to maximize the Fisher information for estimating $\boldsymbol\theta_k$, thereby improving the accuracy of channel reconstruction.
\vspace{-5pt}
\subsection{ML-Based Uplink Channel Reconstruction}\label{subsec_oracle_estimation}

{We next formulate the pilot observation model and derive the corresponding maximum-likelihood (ML) estimator, which provide the statistical basis for Fisher information-based oracle design.} For the $n$th oracle slot, let $c[n]=(m[n],p[n])\in\Phi_{\mathrm{co}}$ denote the selected discrete PA configuration. The received pilot vector at the UE is given by
\begin{align}\label{eq_wk_oracle}
\mathbf w_k[n] = \sqrt{\rho_a}\,\zeta_k^{\mathrm{dl}}(m[n],p[n];\boldsymbol\theta_k)\,\mathbf v_{\mathrm{co}}+\boldsymbol\eta_k[n],
\end{align}
where $\mathbf v_{\mathrm{co}}\in {\mathbb C}^{L_{\mathrm{co}}\times 1}$ is the deterministic pilot sequence with $\mathbb E\{\|\mathbf v_{\mathrm{co}}\|_2^2\}=L_{\mathrm{co}}$, and $\boldsymbol\eta_k[n]\sim\mathcal{CN}(\mathbf 0,\sigma^2\mathbf I_{L_{\mathrm{co}}})$ denotes the additive white Gaussian noise (AWGN).

Since the transmitted pilot waveform is known at the UE, we apply matched filtering to obtain the sufficient statistic. This operation projects the $L_{\mathrm{co}}$-dimensional received vector into a single complex scalar, which preserves all information regarding the geometric parameters as follows:
\begin{align}\label{eq_yq_sufficient}
y_k[n] & \triangleq \frac{1}{L_{\mathrm{co}}}\mathbf v_{\mathrm{co}}^{\mathsf H}\mathbf w_k[n]\notag \\
& = \sqrt{\rho_a}\,\zeta_k^{\mathrm{dl}}(m[n],p[n];\boldsymbol\theta_k)+\nu_k[n].
\end{align}
Here, the equalized noise is distributed as follows:
\begin{align}\label{eq_nu_distribution}
\nu_k[n]\sim\mathcal{CN}\!\left(0,\frac{\sigma^2}{L_{\mathrm{co}}}\right),
\end{align}
which is independent across different slots $n$. 
% The variance of the projected noise $\nu_k[n]$ is:
% \begin{align}
% \mathrm{Var}(\nu_k[n]) &=\frac{1}{L_{\mathrm{co}}^2}\mathbf v_{\mathrm{co}}^{\mathsf H}\mathbb E\{\boldsymbol\eta_k[n]\boldsymbol\eta_k[n]^{\mathsf H}\}\mathbf v_{\mathrm{co}}\notag\\
% &=\frac{\sigma^2}{L_{\mathrm{co}}^2}\|\mathbf v_{\mathrm{co}}\|_2^2 =\frac{\sigma^2}{L_{\mathrm{co}}}.
% \end{align}
% Equation~\eqref{eq_yq_sufficient} demonstrates that each oracle slot yields a scalar noisy observation centered around the true geometric channel coefficient. Consequently, the complete oracle dataset is characterized by the collection $\{y_k[n]\}_{n=1}^{N_{\mathrm{co}}}$. Furthermore, increasing the number of sampled PA locations $Q_{\mathrm{co}}$ enhances the geometric diversity of these observations, which is essential for accurate parameter identifiability.
Thus, the oracle dataset of the $k$th UE is given by the scalar sufficient statistics $\{y_k[n]\}_{n=1}^{N_{\mathrm{co}}}$, where increasing $Q_{\mathrm{co}}$ provides greater geometric diversity for identifying $\boldsymbol\theta_k$.
% Given the sufficient statistic in \eqref{eq_yq_sufficient}, we proceed to formulate the location estimator. Specifically, the negative log-likelihood function of $\boldsymbol\theta$, dropping the additive constants, can be written as follows:
% Based on the sufficient statistics $\{y_k[n]\}_{n=1}^{N_{\mathrm{co}}}$, 
For a trial coordinate $\boldsymbol\theta\in\mathcal U$, the negative log-likelihood function, after dropping additive constants, is given by
\begin{align}
\mathcal L_k(\boldsymbol\theta)=\frac{L_{\mathrm{co}}}{\sigma^2}\sum_{n=1}^{N_{\mathrm{co}}}\left|y_k[n]-\sqrt{\rho_a}\,\zeta^{\mathrm{dl}}(m[n],p[n];\boldsymbol\theta)\right|^2.
\end{align}
Therefore, the ML estimate of $\boldsymbol\theta_k$ is obtained by solving the following nonlinear least-squares problem
\begin{align}\label{eq_theta_ml}
\hat{\boldsymbol\theta}_k
=\arg\min_{\boldsymbol\theta\in\mathcal U}
\sum_{n=1}^{N_{\mathrm{co}}}
\left|y_k[n]-\sqrt{\rho_a}\,\zeta^{\mathrm{dl}}(m[n],p[n];\boldsymbol\theta)\right|^2.
\end{align}

% The estimate $\hat{\boldsymbol\theta}_k$ is an intermediate geometric estimate, while the oracle output is the reconstructed channel table over $\mathcal C$. 
% With $\hat{\boldsymbol\theta}_k$, the $k$th UE can analytically reconstruct the complete downlink channel map for every possible configuration in the space $\mathcal C$
After obtaining $\hat{\boldsymbol\theta}_k$, the $k$th UE analytically reconstructs the downlink channel responses over $\mathcal C$ as follows:
\begin{align}\label{eq_model_based_reconstruction}
\hat\zeta_k^{\mathrm{dl}}(m,p) = h_{\mathrm i}({\bm\psi}_p^m,{\bm\psi}_0^m)\,h_{\mathrm o}(\hat{\boldsymbol\theta}_k,\psi_p^m), \quad \forall (m,p)\in\mathcal C.
\end{align}
{Thus, the proposed oracle module converts sparse pilot observations into complete configuration-dependent channel responses through low-dimensional geometric estimation. Finally, assuming channel reciprocity, the corresponding uplink channel coefficients required for the subsequent access module are readily inferred as follows:}
\begin{align}\label{eq_ul_infer}
\hat\zeta_k^{\mathrm{ul}}(m,p)=\hat\zeta_k^{\mathrm{dl}}(m,p),\quad \forall (m,p)\in\mathcal C.
\end{align}
% In summary, the system successfully predicts the global access codebook from a highly limited set of oracle measurements through this physically interpretable framework.
\vspace{-10pt}
\subsection{FIM-Guided Oracle Codebook Design}\label{subsec_oracle_crlb}
{We now quantify the oracle estimation accuracy and derive a principled rule for choosing the oracle sampling size $Q_{\mathrm{co}}$. The purpose is to identify how the sampled PA configurations affect the estimation of $\boldsymbol\theta_k$ as well as the channel reconstruction accuracy.} Specifically, for a trial coordinate $\boldsymbol\theta=[u_x,u_y]^{\mathsf T}$, define the sample mean function as follows:
\begin{align}
\mu_n(\boldsymbol\theta)\triangleq\sqrt{\rho_a}\,\zeta^{\mathrm{dl}}(m[n],p[n];\boldsymbol\theta),
\end{align}
the corresponding gradient vector is given by 
\begin{align}
\mathbf g_n(\boldsymbol\theta)\triangleq\nabla_{\boldsymbol\theta}\zeta^{\mathrm{dl}}(m[n],p[n];\boldsymbol\theta)\in\mathbb C^{2\times 1}.
\end{align}
Assuming circularly symmetric complex Gaussian noise with variance $\sigma^2/L_{\mathrm{co}}$, the FIM for estimating $\boldsymbol\theta$ is then given by
\begin{align}\label{eq_FIM_main}
\mathbf J(\boldsymbol\theta) = \frac{2\rho_a L_{\mathrm{co}}}{\sigma^2} \sum_{n=1}^{N_{\mathrm{co}}} \Re\!\left\{\mathbf g_n(\boldsymbol\theta)\mathbf g^{\mathsf H}_n(\boldsymbol\theta)\right\}.
\end{align}
{Equation~\eqref{eq_FIM_main} shows that the oracle accuracy is governed by two factors: the pilot observation quality captured by $\rho_a L_{\mathrm{co}}/\sigma^2$, and the geometric information contributed by the sampled PA configurations in the summation. In the considered protocol, $L_{\mathrm{co}}$ is fixed by the oracle-slot structure and is not treated as a codebook design variable. Hence, the oracle design focuses on the number and placement of sampled configurations, namely $N_{\mathrm{co}}=MQ_{\mathrm{co}}$ and the subset $\mathcal P_m^{\mathrm{co}}$ on each segment.} Consequently, the covariance of the location estimate is lower-bounded by the inverse FIM as follows:
\begin{align}\label{eq_crb_theta}
\mathrm{Cov}(\hat{\boldsymbol\theta}_k)\succeq \mathbf J(\boldsymbol\theta_k)^{-1}.
\end{align}

To evaluate \eqref{eq_FIM_main}, we must compute the closed-form gradients. For a generic PA configuration $(m,p)$, define the geometric distance
\begin{align}
r_{m,p}(\boldsymbol\theta)\triangleq\sqrt{\Delta_{x,m,p}^2+\Delta_y^2+d^2},
\end{align} 
where $\Delta_{x,m,p}\triangleq u_x-\psi_p^m$ and $\Delta_y \triangleq u_y-\psi_{\mathrm w}$. Recall that $\zeta=h_{\mathrm i}h_{\mathrm o}$. Since the in-waveguide component $h_{\mathrm i}$ is independent of the user location $\boldsymbol\theta$, the gradient simplifies to
\begin{align}
\nabla_{\boldsymbol\theta}\zeta=h_{\mathrm i}\nabla_{\boldsymbol\theta}h_{\mathrm o}.
\end{align}
Based on the free-space channel model $h_{\mathrm o}=\eta^{\frac{1}{2}}\frac{{\rm e}^{-{\rm j}k_0r_{m,p}}}{r_{m,p}}$, the derivative with respect to (w.r.t.) the distance $r_{m,p}$ is
% \begin{align}
% \frac{\partial h_{\mathrm o}}{\partial r_{m,p}}
% &= \eta^{\frac{1}{2}}{\rm e}^{-{\rm j}k_0r_{m,p}}
% \left(-\frac{1}{r_{m,p}^2}-\frac{{\rm j}k_0}{r_{m,p}}\right)= -h_{\mathrm o}\left(\frac{1+{\rm j}k_0r_{m,p}}{r_{m,p}}\right).
% \end{align}
\begin{align}
\frac{\partial h_{\mathrm o}}{\partial r_{m,p}} = -h_{\mathrm o}\left(\frac{1+{\rm j}k_0r_{m,p}}{r_{m,p}}\right).
\end{align}
The partial derivatives of $r_{m,p}$ w.r.t. the user coordinates are
\begin{align}
\frac{\partial r_{m,p}}{\partial u_x}=\frac{\Delta_{x,m,p}}{r_{m,p}}, \qquad \frac{\partial r_{m,p}}{\partial u_y}=\frac{\Delta_y}{r_{m,p}}.
\end{align}
Applying the chain rule, we obtain the partial derivatives of $h_{\mathrm o}$, which can be written as follows:
% \begin{align}\label{eq_dhoux}
% \frac{\partial h_{\mathrm o}}{\partial u_x} &= \frac{\partial h_{\mathrm o}}{\partial r_{m,p}}\frac{\partial r_{m,p}}{\partial u_x} = -h_{\mathrm o}\left(\frac{\Delta_{x,m,p}}{r_{m,p}^2}\right)(1+{\rm j}k_0r_{m,p}),\\ \label{eq_dhouy}
% \frac{\partial h_{\mathrm o}}{\partial u_y} &= \frac{\partial h_{\mathrm o}}{\partial r_{m,p}}\frac{\partial r_{m,p}}{\partial u_y} =-h_{\mathrm o}\left(\frac{\Delta_y}{r_{m,p}^2}\right)(1+{\rm j}k_0r_{m,p}).
% \end{align}
\begin{align}\label{eq_dhoux}
    \frac{\partial h_{\mathrm o}}{\partial u_x} &= \frac{\partial h_{\mathrm o}}{\partial r_{m,p}}\frac{\partial r_{m,p}}{\partial u_x} = -h_{\mathrm o}\left(\frac{\Delta_{x,m,p}}{r_{m,p}^2}\right)(1+{\rm j}k_0r_{m,p}),
\end{align}
\begin{align}\label{eq_dhouy}
    \frac{\partial h_{\mathrm o}}{\partial u_y} &= \frac{\partial h_{\mathrm o}}{\partial r_{m,p}}\frac{\partial r_{m,p}}{\partial u_y} =-h_{\mathrm o}\left(\frac{\Delta_y}{r_{m,p}^2}\right)(1+{\rm j}k_0r_{m,p}).
\end{align}
Therefore, the gradients of the overall channel coefficient $\zeta$ are concisely expressed as follows:
\begin{subequations}\label{eq_dzeta_theta}
\begin{align}
\frac{\partial \zeta}{\partial u_x} &= -\zeta\left(\frac{\Delta_{x,m,p}}{r_{m,p}^2}\right)(1+{\rm j}k_0r_{m,p}),  \\
\frac{\partial \zeta}{\partial u_y} &= -\zeta\left(\frac{\Delta_y}{r_{m,p}^2}\right)(1+{\rm j}k_0r_{m,p}).
\end{align}
\end{subequations}
Substituting \eqref{eq_dzeta_theta} back into \eqref{eq_FIM_main} yields an explicit and computable FIM for any candidate oracle codebook.

While the CRB bounds the parameter estimation error, the ultimate performance of the access phase depends strictly on the channel reconstruction error. Therefore, we must translate the parameter-space CRB into the channel-domain mean squared error (MSE). By linearizing the channel function around the true location $\boldsymbol\theta_k$, we have
\begin{align}
\zeta^{\mathrm{dl}}(m,p;\hat{\boldsymbol\theta}_k)-\zeta^{\mathrm{dl}}(m,p;\boldsymbol\theta_k) \approx \nabla_{\boldsymbol\theta}\zeta^{\mathrm{dl}}(m,p;\boldsymbol\theta_k)^{\mathsf T}(\hat{\boldsymbol\theta}_k-\boldsymbol\theta_k).
\end{align}
This linear approximation allows us to bound the reconstruction error variance as follows:
\begin{align}\label{eq_mse_transfer}
&\mathbb E\!\left[\left|\zeta^{\mathrm{dl}}(m,p;\hat{\boldsymbol\theta}_k)
-\zeta^{\mathrm{dl}}(m,p;\boldsymbol\theta_k)\right|^2\right]\notag\\
&\quad \lesssim \nabla\zeta(m,p)^{\mathsf H}
\,\mathrm{Cov}(\hat{\boldsymbol\theta}_k)\,\nabla\zeta(m,p),
\end{align}
and by applying the inequality from \eqref{eq_crb_theta}, the channel MSE at any PA configuration $(m,p)$ is bounded by
\begin{align}\label{eq_mse_crb_bound}
\mathrm{MSE}_{\zeta}(m,p)
\lesssim \nabla\zeta(m,p)^{\mathsf H}
\mathbf J(\boldsymbol\theta_k)^{-1}\nabla\zeta(m,p).
\end{align}

This analytical bound enables a systematic selection of the sample size $Q_{\mathrm{co}}$. Given a target channel-error tolerance $\delta_{\mathrm{ch}}^2$, we choose the smallest $Q_{\mathrm{co}}$ that satisfies
\begin{align}\label{eq_Q_rule}
\sup_{\boldsymbol\theta\in\mathcal U} \sup_{(m,p)\in\mathcal C} \nabla\zeta(m,p;\boldsymbol\theta)^{\mathsf H} \mathbf J(\boldsymbol\theta;Q_{\mathrm{co}})^{-1} \nabla\zeta(m,p;\boldsymbol\theta) \le\delta_{\mathrm{ch}}^2.
\end{align}
Equation~\eqref{eq_Q_rule} serves as the fundamental decision rule for the oracle design. It ensures that the chosen sample density guarantees the channel reconstruction accuracy under the worst-case scenario across all possible UE locations and candidate PA configurations.

After determining $Q_{\mathrm{co}}$, the specific spatial selection of these oracle PA configurations can be directly optimized. While uniform sampling provides a practical baseline, it is generally suboptimal due to the nonuniform waveguide attenuation and non-linear near-field geometric variations. A more principled approach is to adopt the D-optimal oracle-point selection criterion as follows:
\begin{align}\label{eq_Dopt}
\max_{\mathcal P_m^{\mathrm{co}}\subseteq\mathcal P_m,\ |\mathcal P_m^{\mathrm{co}}|=Q_{\mathrm{co}}} \ \min_{\boldsymbol\theta\in\mathcal U} \log\det\big(\mathbf J_m(\boldsymbol\theta;\mathcal P_m^{\mathrm{co}})\big),
\end{align}
% Alternatively, one can employ the A-optimal criterion:
% \begin{align}\label{eq_Aopt}
% \min_{\mathcal P_m^{\mathrm{co}}\subseteq\mathcal P_m,\ |\mathcal P_m^{\mathrm{co}}|=Q}
% \ \max_{\boldsymbol\theta\in\mathcal U}
% \mathrm{tr}\big(\mathbf J_m(\boldsymbol\theta;\mathcal P_m^{\mathrm{co}})^{-1}\big).
% \end{align}
This optimization problems is combinatorial and can be efficiently approximated using greedy subset selection algorithms based on near-monotone information gain.

\vspace{-5pt}
\section{The SA-Based Access Module}\label{sec_SWAN_access}

{After the channel-oracle stage, this section develops an SA-based access module. The proposed scheme enables each UE to make oracle-guided access decisions without broadcasting a full access codebook, and its single-RF analog aggregation structure reduces the AP-side energy consumption.}
%The derivation begins with the formulation of the access codebook, followed by detailed signal models and outage analyses, ultimately leading to explicit design rules for optimizing access performance.
\vspace{-5pt}
\subsection{SA-based Access Codebook Formulation}\label{subsec_sa_model}

Building on the group-wise SA model in \eqref{eq_group_def}--\eqref{eq_SA_ul}, the access codebook specifies which PA locations can be selected within each activated group and how the group attempts are scheduled over time. For the $m$th segment, we define a subset of access anchors to discretize the spatial coverage, which is given by
\begin{align}
\mathcal P_m^{\mathrm{ac}}=\{p_m^{(1)},\ldots,p_m^{(Q_{\mathrm{ac}})}\}\subseteq\mathcal P_m.
\end{align}
These $Q_{\mathrm{ac}}\leq P$ anchors are uniformly distributed along the segment with a fixed spacing, characterized by 
\begin{align}\label{eq_dx_def}
\psi_q^m=\psi_0^m+(q-1)\Delta_x,
\end{align}
where $\Delta_x=\tfrac{L}{Q_{\mathrm{ac}}-1}$ and $q\in\{1,\ldots,Q_{\mathrm{ac}}\}$.
{One SA access period consists of $N_{\mathrm{ac}}=GQ_{\mathrm{ac}}$ slots, indexed by the group-attempt pair $(g,q)$, where $g\in\mathcal G$ is the active group index and $q\in\{1,\ldots,Q_{\mathrm{ac}}\}$ is the \emph{attempt index} within the $g$th group. Thus, the group-level schedule is deterministic and assigns $Q_{\mathrm{ac}}$ attempts to each group.}
The linear slot index $t\in\{1,\ldots,N_{\mathrm{ac}}\}$ is related to the group-attempt pair $(g,q)$ by
\begin{subequations}\label{eq_t_gq}
\begin{align}
g(t) &= 1+\left\lfloor\frac{t-1}{Q_{\mathrm{ac}}}\right\rfloor, \\
q(t) &= 1+\big((t-1)\bmod Q_{\mathrm{ac}}\big).
\end{align}
\end{subequations}
{Accordingly, the SA codebook specifies only the group-level schedule. For the $(g,q)$th access slot, the AP activates $\mathcal S_g$ and independently samples one anchor from $\mathcal P_m^{\mathrm{ac}}$ for each $m\in\mathcal S_g$. The access diversity is therefore generated by intra-group PA randomization rather than by a slot-specific access codebook.}

% \subsection{Performance Metrics}\label{subsec_performance_metrics}
\subsection{CLT-Based Outage Probability Analysis}\label{subsec_oracle_projection}
For the $q$th attempt of the $g$th group, let ${\mathbf p}_{g,q}$ denote the PA realization randomly selected on the active segments. Following the SA signal model in \eqref{eq_SA_g_def} and \eqref{eq_SA_ul}, the received signal is
\begin{align}\label{eq_yk_sa}
y_{k,g,q}=\sqrt{\rho_k}\, g_{k,g}({\mathbf p}_{g,q})s_k+n_{k,g,q},
\end{align}
where $n_{k,g,q}\sim\mathcal{CN}(0,|\mathcal S_g|\sigma^2)$. The decoding condition is
\begin{align}\label{eq_snr_sa}
\frac{\rho_k}{|\mathcal S_g|\sigma^2}|g_{k,g}({\mathbf p}_{g,q})|^2\ge\gamma_{\mathrm{ac}},
\end{align}
or equivalently,
\begin{align}
|g_{k,g}({\mathbf p}_{g,q})|^2\ge\Gamma_{k,g},
\end{align}
where
\begin{align}
\Gamma_{k,g}\triangleq\frac{\gamma_{\mathrm{ac}}|\mathcal S_g|\sigma^2}{\rho_k}.
\end{align}
{The oracle stage provides the reconstructed uplink channel responses $\{\hat\zeta_k^{\mathrm{ul}}(m,p):(m,p)\in\mathcal C\}$. The following outage analysis is therefore carried out w.r.t. the reconstructed responses. In one access attempt of the $g$th group, each active segment independently and uniformly samples one anchor from $\mathcal P_m^{\mathrm{ac}}$, i.e.,}
\begin{align}\label{eq_Pk_m_def}
P_{k,m}\sim\mathrm{Unif}(\mathcal P_m^{\mathrm{ac}}),\quad m\in\mathcal S_g.
\end{align}
Define
\begin{align}\label{eq_Xkm_def_new}
X_{k,m}\triangleq\hat\zeta_k^{\mathrm{ul}}(m,P_{k,m}), \quad \hat g_{k,g,q}\triangleq\sum_{m\in\mathcal S_g}X_{k,m}.
\end{align}
Then the single-attempt failure probability conditioned on the reconstructed channel responses is 
\begin{align}\label{eq_Fkg_def}
F_{k,g}(\Gamma_{k,g}) \!\triangleq\!
\Pr\left(|\hat g_{k,g,q}|^2\!<\!\Gamma_{k,g}\!\mid\!
\{\hat\zeta_k^{\mathrm{ul}}(m,p)\}_{(m,p)\in\mathcal C}\!\right).
\end{align}

Since $P_{k,m}$ is uniformly distributed over $\mathcal P_m^{\mathrm{ac}}$, the first two moments of $X_{k,m}$ are obtained by averaging the reconstructed responses over the candidate anchors of the $m$th segment:
\begin{align}\label{eq_mu_km}
\mu_{k,m} &\triangleq \mathbb E[X_{k,m}] = \frac{1}{Q_{\mathrm{ac}}}\sum_{p\in\mathcal P_m^{\mathrm{ac}}}\hat\zeta_k^{\mathrm{ul}}(m,p),
\end{align}
\begin{align} \label{eq_s2_km}
s_{k,m}^2 &\triangleq \mathbb E[|X_{k,m}|^2] =\frac{1}{Q_{\mathrm{ac}}}\sum_{p\in\mathcal P_m^{\mathrm{ac}}}\left|\hat\zeta_k^{\mathrm{ul}}(m,p)\right|^2,
\end{align}
\begin{align} \label{eq_sigma2_km}
\sigma_{k,m}^2 &\triangleq \mathrm{Var}(X_{k,m}) = s_{k,m}^2-|\mu_{k,m}|^2.
\end{align}
Hence, for the $g$th group,
\begin{align}\label{eq_mu_V_group}
\mu_{k,g}=\sum_{m\in\mathcal S_g}\mu_{k,m}, \quad V_{k,g}=\sum_{m\in\mathcal S_g}\sigma_{k,m}^2.
\end{align}
For moderate group size, central-limit-theorem (CLT) gives that
\begin{align}\label{eq_gk_cn}
\hat g_{k,g,q}\approx\mathcal{CN}(\mu_{k,g},V_{k,g}).
\end{align}

\begin{proposition}\label{prop:outage_exact}
Under independent random anchor sampling across segments and attempts, the single-attempt failure probability for the SA-based access module admits the Marcum-$Q$ approximation
\begin{align}\label{eq_Fkg_marcum}
F_{k,g}(\Gamma_{k,g}) \approx 1-Q_1\left(\sqrt{\frac{2|\mu_{k,g}|^2}{V_{k,g}}},\sqrt{\frac{2\Gamma_{k,g}}{V_{k,g}}}\right),
\end{align}
and the outage probability is
\begin{align}\label{eq_out_prod}
P_k^{\mathrm{out}} &\approx \prod_{g=1}^{G}\Big(F_{k,g}(\Gamma_{k,g})\Big)^{Q_{\mathrm{ac}}}\notag \\
&= \prod_{g=1}^{G} \left[1-Q_1\!\left(\sqrt{\frac{2|\mu_{k,g}|^2}{V_{k,g}}},\sqrt{\frac{2\Gamma_{k,g}}{V_{k,g}}}\right)\right]^{Q_{\mathrm{ac}}}.
\end{align}
\end{proposition}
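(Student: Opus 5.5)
The proof takes the CLT approximation \eqref{eq_gk_cn} as given and converts it into a Rician-type magnitude distribution. Write $\hat g_{k,g,q}\approx \mu_{k,g}+w$ with $w\sim\mathcal{CN}(0,V_{k,g})$. Then $|\hat g_{k,g,q}|$ is approximately Rician, and it has two parameters. The first is the noncentrality $\nu=|\mu_{k,g}|$. The second is the per-real-dimension variance $\varsigma^2=V_{k,g}/2$. We will check the second value directly: since the anchors $P_{k,m}$ are independent across $m\in\mathcal S_g$, the variances \eqref{eq_sigma2_km} add to $V_{k,g}$ in \eqref{eq_mu_V_group}. Circularity of the approximating Gaussian is part of the stated CLT model, so no separate argument is needed for it.

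The standard Rician CDF is
\begin{align*}
\Pr(|\hat g|\le x)=1-Q_1(\nu/\varsigma,\,x/\varsigma).
\end{align*}
Set $x=\sqrt{\Gamma_{k,g}}$. This gives $\nu/\varsigma=\sqrt{2|\mu_{k,g}|^2/V_{k,g}}$ and $x/\varsigma=\sqrt{2\Gamma_{k,g}/V_{k,g}}$, which is exactly \eqref{eq_Fkg_marcum}. If we want this to be self-contained, we can rederive the CDF instead of citing it. To do so, write the density of $\hat g$ in polar coordinates and integrate the angle, which produces $I_0$. Then recognize the radial integral as the definition of the Marcum-$Q$ function after the substitution $t=r/\varsigma$.

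For the outage expression \eqref{eq_out_prod}, define outage as failure in all $N_{\mathrm{ac}}=GQ_{\mathrm{ac}}$ attempts, as described by the slot schedule \eqref{eq_t_gq}. By assumption, anchor realizations are drawn independently across attempts and segments. Conditioned on the reconstructed table $\{\hat\zeta_k^{\mathrm{ul}}\}$, the events $\{|\hat g_{k,g,q}|^2<\Gamma_{k,g}\}$ are therefore mutually independent. Within a group $g$ they are also identically distributed in $q$, because $\mu_{k,g}$, $V_{k,g}$ and $\Gamma_{k,g}$ do not depend on $q$. So the joint failure probability factors as $\prod_g F_{k,g}^{Q_{\mathrm{ac}}}$. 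Substituting \eqref{eq_Fkg_marcum} then gives the second line.

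The main obstacle is justifying the Gaussian step, since the approximation is only as good as the CLT. Group sizes are small, $|\mathcal S_g|\le S$, and the summands are non-identically distributed. The rigorous content will therefore be limited to the exact factorization and the exact Rician CDF under the Gaussian model. For the CLT itself, I would point to a Lyapunov-type condition on the bounded summands, noting that $|X_{k,m}|\le\max_p|\hat\zeta_k^{\mathrm{ul}}(m,p)|$, and I would state the result explicitly as an approximation, which is how \eqref{eq_gk_cn} is already framed. A second point is degenerate groups where $V_{k,g}\to0$. There the Marcum expression tends to the indicator $\mathbf 1\{|\mu_{k,g}|^2<\Gamma_{k,g}\}$, and I will note this as a consistent limit.
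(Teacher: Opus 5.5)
Your proposal is correct and follows the same route as the paper: take the CLT model \eqref{eq_gk_cn} as given, evaluate the Rician CDF of $|\hat g_{k,g,q}|$ at $\sqrt{\Gamma_{k,g}}$, and factor the all-attempts-fail event using the independence of the anchor draws across attempts and groups. Your explicit identification $\varsigma^2=V_{k,g}/2$, which produces the factors of $2$ in \eqref{eq_Fkg_marcum}, and your remarks on the small-$|\mathcal S_g|$ CLT caveat and the $V_{k,g}\to 0$ limit go slightly beyond the paper's argument but do not change it.
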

\begin{IEEEproof}
{From \eqref{eq_gk_cn}, $\hat g_{k,g,q}$ is approximated as a complex Gaussian random variable with mean $\mu_{k,g}$ and variance $V_{k,g}$. Hence, $|\hat g_{k,g,q}|$ follows a Rician distribution, whose CDF at $\sqrt{\Gamma_{k,g}}$ gives \eqref{eq_Fkg_marcum}. Since the random anchors are independently selected across attempts and groups, outage occurs only when all $Q_{\mathrm{ac}}$ attempts of every group fail, which yields the product form in \eqref{eq_out_prod}.}
\end{IEEEproof}

Leveraging \eqref{eq_out_prod}, a generic success model for the $k$th UE can be written as follows:
\begin{align}\label{eq_pk_succ_general}
p_k^{\mathrm{succ}}\approx\big(1-P_k^{\mathrm{out}}\big)p_k^{\mathrm{col}},
\end{align}
where $p_k^{\mathrm{col}}$ denotes the conditional probability that the packet of the $k$th UE is not lost due to multiuser contention after at least one access attempt meets the decoding threshold. Therefore,
\begin{align}\label{eq_EKa_sum}
\mathbb E[K_a]=\sum_{k=1}^{K}p_k^{\mathrm{succ}}.
\end{align}
For SA, $N_{\mathrm{ac}}=GQ_{\mathrm{ac}}$ and $N_{\mathrm{co}}=MQ_{\mathrm{co}}$ in \eqref{eq_TP_def}. Therefore, the design variables $(S,Q_{\mathrm{ac}},Q_{\mathrm{co}})$ affect throughput through both the expected number of successful UEs and the protocol duration.

\subsection{Access Tradeoffs and Design Guidelines}\label{subsec_det_guarantees}

\subsubsection{Contention Splitting}
The group size $S$ has two coupled effects in the SA access design. Through $G=\lceil M/S\rceil$, it specifies the number of segment groups available for distributing active UEs across SWAN configurations. Meanwhile, it also specifies the number of segments coherently combined in each access slot, thereby affecting the outage metric in \eqref{eq_out_prod}.

The contention effect can be characterized as follows. Let $K_g$ denote the number of active UEs selecting access slots associated with the $g$th group, where $\sum_{g=1}^{G}K_g=K$. The number of group-level contending UE pairs is given by
\begin{align}\label{eq_pair_collision}
C_{\mathrm{grp}} 
=\sum_{g=1}^{G}\binom{K_g}{2}
=\frac{1}{2}\left(\sum_{g=1}^{G}K_g^2-K\right).
\end{align}
For fixed $K$ and $G$, this quantity is minimized by a balanced group occupancy, which can be written as follows:
\begin{align}\label{eq_pair_collision_balanced}
C_{\mathrm{grp}}\approx\frac{1}{2}\left(\frac{K^2}{G}-K\right).
\end{align}
Therefore, relative to the ungrouped case $G=1$, an oracle-guided access pattern with well-balanced group selections reduces the pairwise contention load by approximately a factor of $G$. This reduction is not guaranteed by the oracle itself; it depends on whether the learned channel responses induce sufficiently diverse group selections across the active UEs.

\subsubsection{Aggregation-Latency Tradeoff}
This contention-splitting gain is accompanied by two design costs. First, since each group is assigned $Q_{\mathrm{ac}}$ attempts, the number of access slots is $N_{\mathrm{ac}}=GQ_{\mathrm{ac}}=\left\lceil\frac{M}{S}\right\rceil Q_{\mathrm{ac}}$.
Thus, decreasing $S$ increases the access duration. Second, a smaller group may reduce the aggregation gain in \eqref{eq_SA_ul}. Conversely, increasing $S$ shortens the group sweep and may improve analog aggregation, but it also concentrates the active UEs over fewer group-associated slots and increases the impact of phase cancellation in the aggregated channel. Hence, $S$ is fundamentally a contention-aggregation tradeoff parameter rather than a purely structural grouping parameter.

% \subsubsection{Throughput-Oriented Parameter Selection}
Overall, heavy access load favors a smaller $S$, which creates more segment groups and alleviates contention, whereas light load or stringent latency constraints favor a larger $S$, which shortens the group sweep and preserves analog aggregation. The parameters $Q_{\mathrm{ac}}$ and $Q_{\mathrm{co}}$ should be enlarged only when their marginal reliability gains, through access diversity and oracle accuracy, respectively, compensate for the additional protocol duration in \eqref{eq_TP_def}. Thus, the SA design should favor moderate grouping rather than either full aggregation or overly fine splitting.

\section{The SM-Based $R$-Access Module}\label{sec_partialSM_access}
This section presents an SM-based access module, referred to as the $R$-access scheme. In each access slot, the AP activates $R$ selected segments and connects them to $R$ dedicated RF chains, thereby enabling spatially multiplexed access with improved coverage and high-load capability.

% \subsection{Slot Configuration and Effective SNR}\label{subsec_partialSM_arch}
As introduced in the SM model in Section~\ref{subsec_SWAN_three_protocols}, the selected segment subset in the $t$th access slot is denoted by
\begin{align}\label{eq_partialSM_active_set}
\mathcal{S}_t \subseteq \mathcal{M},\qquad |\mathcal{S}_t|=R.
\end{align}
For the $m$th active segment, $m\in\mathcal{S}_t$, one PA index $p_m[t]\in\mathcal{P}_m$ is assigned. The resulting slot configuration is
\begin{align}\label{eq_partialSM_slot_config}
c_t \triangleq \big(\mathcal{S}_t,\{p_m[t]\}_{m\in\mathcal{S}_t}\big).
\end{align}
Let $\{m_1,\ldots,m_R\}=\mathcal{S}_t$ follow a fixed ordering. The effective uplink channel vector of the $k$th UE under configuration $c_t$ is defined as follows:
\begin{align}\label{eq_partialSM_h_vec}
\left[\mathbf h_k^{\mathrm{ul}}[t]\right]_r
\triangleq \zeta_k^{\mathrm{ul}}(m_r,p_{m_r}[t]),\quad r=1,\ldots,R .
\end{align}
Since the activated segments are connected to dedicated RF chains, digital combining yields the received SNR
\begin{align}\label{eq_partialSM_snr}
\mathrm{SNR}^{\mathrm{rx}}_{k}[t] = \frac{\rho_k}{\sigma^2}\|\mathbf{h}_k^{\mathrm{ul}}[t]\|_2^2 = \frac{\rho_k}{\sigma^2} \sum_{m\in\mathcal{S}_t}\big|\zeta_k^{\mathrm{ul}}(m,p_m[t])\big|^2.
\end{align}
Equation~\eqref{eq_partialSM_snr} reveals a monotonic SNR accumulation. Specifically, the SNR of each access slot is lower bounded by the contribution of any selected segment. Hence, the subsequent codebook design only needs to ensure that each UE encounters one sufficiently strong segment-anchor pair in at least one access slot.
\vspace{-5pt}
\subsection{$R$-Access Codebook Formulation}\label{subsec_partialSM_codebook}

Unlike the SA scheme, where the intra-group PA configuration is randomized and no full access codebook needs to be broadcast, $R$-access requires a deterministic access codebook shared by the AP and UEs. This codebook assigns a configuration $c_t$ to each access slot and consists of two hierarchical layers: a segment-activation schedule and an intra-segment anchor schedule.

First, the segment-activation schedule determines the subsets $\{\mathcal{S}_t\}$. The $M$ segments are swept by activation blocks of size $R$, and the number of required blocks is $B \triangleq \left\lceil\tfrac{M}{R}\right\rceil$.
For each block index $b\in\{1,\ldots,B\}$, the segment set is constructed as follows:
\begin{align}\label{eq_partialSM_block_set}
\mathcal{S}^{(b)} \triangleq \Big\{(b-1)R+1,\ldots,\min(bR,M)\Big\} \cup \mathcal{W}^{(b)}.
\end{align}
If the last block contains fewer than $R$ segments, $\mathcal{W}^{(b)}$ fills the remaining positions by reusing the first few segment indices, so that every block contains exactly $R$ segments. This construction guarantees that every segment is activated in at least one block, i.e.,
\begin{align}\label{eq_partialSM_cover_property}
\forall m\in\mathcal{M},\ \exists b\in\{1,\ldots,B\}:\ m\in\mathcal{S}^{(b)}.
\end{align}

Second, the anchor schedule determines the PA position $\{p_m[t]\}$. For the $m$th segment, the uniformly distributed access anchors $\{p_m^{(q)}\}_{q=1}^{Q_{\mathrm{ac}}}$ from \eqref{eq_dx_def} are reused. Each access slot is indexed by an anchor index $q$ and a segment-block index $b$, with
\begin{align}\label{eq_partialSM_slot_indexing}
t=(q-1)B+b,\ q\in\{1,\ldots,Q_{\mathrm{ac}}\},\ b\in\{1,\ldots,B\}.
\end{align}
During the $t$th slot, the active segment set is $\mathcal{S}_t=\mathcal{S}^{(b)}$. For each active segment $m\in\mathcal{S}^{(b)}$, the activated PA is set to
\begin{align}\label{eq_partialSM_pin_rule}
p_m[t]=p_m^{(q)}.
\end{align}
The complete $R$-access codebook is then formally defined as follows:
\begin{align}\label{eq_partialSM_codebook_def}
\Phi_{\mathrm{ac}}^{(R)} \triangleq \big\{c_t:\ t=1,\ldots,N_{\mathrm{ac}}^{(R)}\big\},
\end{align}
The resulting access period length is
\begin{align}\label{eq_partialSM_Nac_scaling}
N_{\mathrm{ac}}^{(R)} = BQ_{\mathrm{ac}} = Q_{\mathrm{ac}}\left\lceil\tfrac{M}{R}\right\rceil.
\end{align}
% Given the reconstructed uplink responses $\hat{\zeta}_k^{\mathrm{ul}}(m,p)$ provided by the channel oracle, the $k$th UE applies the broadcasted codebook and selects the access slot with the largest predicted effective channel gain, which is given by
Given the reconstructed uplink responses $\hat{\zeta}_k^{\mathrm{ul}}(m,p)$ provided by the channel oracle, the $k$th UE applies the broadcasted codebook, and the selected access slot is determined by
\begin{align}\label{eq_partialSM_select_rule}
t_k^\star \triangleq \min\arg\max_{t\in\{1,\ldots,N_{\mathrm{ac}}^{(R)}\}}
\sum_{m\in\mathcal{S}_t}\big|\hat{\zeta}_k^{\mathrm{ul}}(m,p_m[t])\big|^2.
\end{align}
In this way, the broadcasted codebook provides a deterministic configuration schedule, while the oracle output enables each UE to identify its most favorable access slot.
% \begin{table*}[!t]
% \centering
% \caption{Comparison of SA and $R$-Access Under Different Hardware Protocols}
% \label{tab:SA_R_access_comparison}
% \renewcommand{\arraystretch}{1.3}
% \setlength{\tabcolsep}{6pt}
% \begin{tabular}{|c|c|c|}
% \hline
%  & \textbf{SA-Based Access} & \textbf{SM-Based $R$-Access} \\ \hline
% RF structure 
% & Single-RF architecture 
% & Multi-RF architecture \\ \hline

% Combining mechanism 
% & Analog coherent summation 
% & Digital energy combining \\ \hline

% Hardware controllability 
% & Limited 
% & High \\ \hline

% Access philosophy 
% & Randomized intra-group access codebook 
% & Deterministic slot-wise access codebook \\ \hline

% Performance gain source 
% & Statistical diversity across codewords 
% & Structured coverage through deterministic codewords \\ \hline

% Reliability nature 
% & Probabilistic 
% & Deterministic \\ \hline
% \end{tabular}
% \end{table*}
\vspace{-10pt}
\subsection{Geometry-Based Coverage Analysis}\label{subsec_partialSM_guarantee}

The monotonic SNR accumulation in \eqref{eq_partialSM_snr} allows the $R$-access codebook to be analyzed through a single strong segment-anchor pair. We therefore first derive a geometry-based lower bound on the gain of at least one such pair. From the propagation model in Section~\ref{sec_SWAN_protocols}, the segment-wise channel gain is
\begin{align}\label{eq_partialSM_zeta_gain}
\big|\zeta_k^{\mathrm{ul}}(m,p)\big|^2 = \big|h_{\mathrm i}(\boldsymbol\psi_p^m,\boldsymbol\psi_0^m)\big|^2 \frac{\eta}{\|\mathbf u_k-\boldsymbol\psi_p^m\|^2}.
\end{align}
With the uniformly spaced access anchors, for any valid UE location $\mathbf u_k\in\mathcal U$, there exists at least one segment-anchor pair $(m^\star,q^\star)$ whose distance is bounded by
\begin{align}\label{eq_partialSM_rcov_ub}
\|\mathbf u_k\!-\!\boldsymbol\psi_{q^\star}^{m^\star}\| \le r_{\mathrm{cov}}^{\mathrm{ub}} \triangleq \sqrt{\left(\frac{L}{2(Q_{\mathrm{ac}}-1)}\right)^2 + Y_{\max}^2 + d^2}.
\end{align}
The in-segment waveguide attenuation is lower bounded by its worst case over length $L$, namely
\begin{align}\label{eq_partialSM_hi_lb}
\big|h_{\mathrm i}(\boldsymbol\psi_{q}^{m},\boldsymbol\psi_0^m)\big|^2 \ge 10^{-\frac{\kappa}{10}L}.
\end{align}
Combining \eqref{eq_partialSM_zeta_gain}, \eqref{eq_partialSM_rcov_ub}, and \eqref{eq_partialSM_hi_lb} gives the following guaranteed gain for at least one segment-anchor pair:
\begin{align}\label{eq_partialSM_Gmin_def}
\big|\zeta_k^{\mathrm{ul}}(m^\star,p_{m^\star}^{(q^\star)})\big|^2\ge G_{\min}^{(R)}\triangleq 10^{-\frac{\kappa}{10}L}\cdot \frac{\eta}{(r_{\mathrm{cov}}^{\mathrm{ub}})^2}.
\end{align}
This bound leads to the following coverage guarantee for the $R$-access codebook. 

\begin{proposition}\label{prop:partialSM_exist_slot}
Consider the $R$-access codebook $\Phi_{\mathrm{ac}}^{(R)}$ formulated in \eqref{eq_partialSM_codebook_def}. For the $k$th UE located at $\mathbf u_k\in\mathcal U$, there exists at least one access slot indexed by $t^\star\in\{1,\ldots,N_{\mathrm{ac}}^{(R)}\}$ satisfying
\begin{align}\label{eq_partialSM_exist_slot}
\|\mathbf h_k^{\mathrm{ul}}[t^\star]\|_2^2 \ge G_{\min}^{(R)}.
\end{align}
Consequently, provided the transmit power satisfies the condition 
\begin{align}\label{eq_partialSM_power_suff}
\frac{\rho_k}{\sigma^2}G_{\min}^{(R)}\ge \gamma_{\mathrm{ac}},
\end{align}
the $k$th UE is guaranteed to experience at least one slot where the received SNR meets the decoding threshold, which eliminates the possibility of a coverage-induced outage.
\end{proposition}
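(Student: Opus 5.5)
The plan is to reduce the statement to the single-pair bound \eqref{eq_partialSM_Gmin_def}. We use the codebook's covering structure to show that the strong segment-anchor pair $(m^\star,q^\star)$ actually appears in some slot. Monotonicity of the SM combining gain \eqref{eq_partialSM_snr} then does the rest.

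First, I would fix the UE location $\mathbf u_k\in\mathcal U$ and take the pair $(m^\star,q^\star)$ guaranteed by \eqref{eq_partialSM_rcov_ub}. To make that step self-contained, I would briefly justify the existence claim. Since the segments tile $[0,D_x]$ with $D_x=ML$, there is a segment $m^\star$ with $\psi_0^{m^\star}\le u_{x,k}\le \psi_0^{m^\star}+L$. By \eqref{eq_dx_def}, the anchors on that segment are spaced by $\Delta_x=L/(Q_{\mathrm{ac}}-1)$, so some anchor $q^\star$ satisfies $|u_{x,k}-\psi_{q^\star}^{m^\star}|\le \Delta_x/2$. The $y$- and height offsets are bounded by $Y_{\max}$ and $d$, respectively. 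Together with \eqref{eq_partialSM_hi_lb}, this yields $|\zeta_k^{\mathrm{ul}}(m^\star,p_{m^\star}^{(q^\star)})|^2\ge G_{\min}^{(R)}$, exactly as in \eqref{eq_partialSM_Gmin_def}.

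Second, I would exhibit the slot. By the covering property \eqref{eq_partialSM_cover_property}, there is a block $b^\star$ with $m^\star\in\mathcal S^{(b^\star)}$. Set $t^\star=(q^\star-1)B+b^\star$ according to \eqref{eq_partialSM_slot_indexing}. This index lies in $\{1,\ldots,BQ_{\mathrm{ac}}\}=\{1,\ldots,N_{\mathrm{ac}}^{(R)}\}$ by \eqref{eq_partialSM_Nac_scaling}. In slot $t^\star$ the active set is $\mathcal S_{t^\star}=\mathcal S^{(b^\star)}\ni m^\star$, and by the pin rule \eqref{eq_partialSM_pin_rule} the PA on $m^\star$ is $p_{m^\star}^{(q^\star)}$. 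The entries of $\mathbf h_k^{\mathrm{ul}}[t^\star]$ in \eqref{eq_partialSM_h_vec} are therefore segment-wise coefficients, one of which is the strong one. Dropping the other nonnegative terms gives $\|\mathbf h_k^{\mathrm{ul}}[t^\star]\|_2^2\ge |\zeta_k^{\mathrm{ul}}(m^\star,p_{m^\star}^{(q^\star)})|^2\ge G_{\min}^{(R)}$, which is \eqref{eq_partialSM_exist_slot}.

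Third, the SNR consequence follows directly from \eqref{eq_partialSM_snr}: $\mathrm{SNR}^{\mathrm{rx}}_k[t^\star]\ge \frac{\rho_k}{\sigma^2}G_{\min}^{(R)}\ge\gamma_{\mathrm{ac}}$ under \eqref{eq_partialSM_power_suff}. Hence no coverage-induced outage can occur.

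The main obstacle is the geometric existence step, specifically the edge and boundary cases. When $u_{x,k}$ sits at a segment boundary, either adjacent segment works. If $\mathcal W^{(b)}$ reuses segments, a segment appearing twice is harmless because the norm only grows. I would also note that $Y_{\max}$ must be read as $\max_{\mathbf u\in\mathcal U}|u_y-\psi_{\mathrm w}|$ and $d$ as the height $h$ for the bound to hold. Finally, the proposition concerns true channels, whereas slot selection \eqref{eq_partialSM_select_rule} uses reconstructed ones. The statement only asserts existence of a good slot, so that mismatch lies outside the proof.
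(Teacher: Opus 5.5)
Your proposal is correct and matches the paper's proof step for step. You take the strong pair from \eqref{eq_partialSM_Gmin_def}, use the covering property \eqref{eq_partialSM_cover_property} to find a block containing $m^\star$, set $t^\star=(q^\star-1)B+b^\star$, and drop the other nonnegative terms in \eqref{eq_partialSM_snr}. The one addition is your explicit derivation of the covering bound \eqref{eq_partialSM_rcov_ub} (nearest anchor within $\Delta_x/2$, reading $Y_{\max}$ and $d$ as the lateral offset and the height), which the paper asserts without proof before the proposition.
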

\begin{IEEEproof}
By \eqref{eq_partialSM_Gmin_def}, for any $\mathbf u_k\in\mathcal U$, there exists at least one segment-anchor pair $(m^\star,q^\star)$ such that
\begin{align}
\big|\zeta_k^{\mathrm{ul}}(m^\star,p_{m^\star}^{(q^\star)})\big|^2\ge G_{\min}^{(R)}.
\end{align}
The coverage property in \eqref{eq_partialSM_cover_property} ensures that $m^\star$ belongs to at least one segment block, denoted by $\mathcal S^{(b^\star)}$. Consider the access slot $t^\star=(q^\star-1)B+b^\star$. From the anchor schedule in \eqref{eq_partialSM_pin_rule}, the activated PA on the $m^\star$th segment satisfies $p_{m^\star}[t^\star]=p_{m^\star}^{(q^\star)}$. Since the received channel energy in \eqref{eq_partialSM_snr} is a sum of nonnegative terms, we have
\begin{align}
\|\mathbf h_k^{\mathrm{ul}}[t^\star]\|_2^2
&=
\sum_{m\in\mathcal S^{(b^\star)}}\big|\zeta_k^{\mathrm{ul}}(m,p_m[t^\star])\big|^2 \notag \\
&\ge
\big|\zeta_k^{\mathrm{ul}}(m^\star,p_{m^\star}^{(q^\star)})\big|^2
\ge G_{\min}^{(R)}.
\end{align}
Substituting this bound into \eqref{eq_partialSM_snr} gives $\mathrm{SNR}_{k}^{\mathrm{rx}}[t^\star]\ge \rho_k G_{\min}^{(R)}/\sigma^2$. Hence, \eqref{eq_partialSM_power_suff} guarantees $\mathrm{SNR}_{k}^{\mathrm{rx}}[t^\star]\ge\gamma_{\mathrm{ac}}$, which completes the proof.
\end{IEEEproof}

Proposition~\ref{prop:partialSM_exist_slot} addresses the coverage aspect of $R$-access. In multiuser access, reliability is further governed by slot occupancy. Since each selected slot provides $R$-dimensional observations, a contention-induced failure occurs only when more than $R$ UEs select the same slot.
\vspace{-10pt}
\subsection{Collision Probability Analysis}\label{subsec_partialSM_collision}
Once the oracle output and the codebook are fixed, each UE follows the deterministic rule in \eqref{eq_partialSM_select_rule}. However, the selected slot remains random across the network because UE locations are random. We therefore characterize the resulting contention behavior as follows. Let $K$ active UEs independently draw their locations in the service region and apply the same oracle-guided access rule. The selected slot of a generic UE is a discrete random variable with probability
\begin{align}\label{eq_partialSM_pt_sel}
p_t^{\mathrm{sel}} \triangleq \Pr\!\big(t_k^\star=t\big),\qquad t=1,\ldots,N_{\mathrm{ac}}^{(R)},
\end{align}
where $\sum_t p_t^{\mathrm{sel}}=1$. The quantity $p_t^{\mathrm{sel}}$ is induced jointly by the user-geometry distribution, the deterministic access codebook, and the oracle-guided access rule in \eqref{eq_partialSM_select_rule}.
Let
\begin{align}\label{eq_partialSM_Ct_def}
C_t \triangleq \sum_{k=1}^{K}\mathbf 1\{t_k^\star=t\}
\end{align}
which denotes the number of UEs selecting the $t$th slot. Under independent UE locations, the joint slot-occupancy distribution is given by
\vspace{-5pt}
\begin{align}\label{eq_partialSM_multinomial}
\Pr\big(C_1\!=\!c_1,\ldots,C_{N_{\mathrm{ac}}^{(R)}}\!=\!c_{N_{\mathrm{ac}}^{(R)}}\big)
\!=\!\frac{K!}{\prod_{t=1}^{N_{\mathrm{ac}}^{(R)}}\!\! c_t!}
\!\prod_{t=1}^{N_{\mathrm{ac}}^{(R)}}\! \big(p_t^{\mathrm{sel}}\big)^{c_t},
\end{align}
for any nonnegative integers $\{c_t\}$ satisfying $\sum_{t=1}^{N_{\mathrm{ac}}^{(R)}} c_t=K$. Marginally, one has $C_t\sim\mathrm{Binomial}(K,p_t^{\mathrm{sel}})$.

A collision in $R$-access is declared only when the number of simultaneously selected UEs in one slot exceeds the number of RF chains. Conditioned on the $k$th UE selecting the $t$th slot, its transmission is resolvable if at most $R-1$ other UEs select the same slot. The corresponding conditional resolvability probability can be written as follows: 
\begin{align}\label{eq_partialSM_pcol_cond}
p_{\mathrm{col},t}^{(R)} &\triangleq \Pr(C_t^{(-k)}\le R-1)\notag \\
&= \sum_{n=0}^{R-1}\binom{K-1}{n}\big(p_t^{\mathrm{sel}}\big)^n\big(1-p_t^{\mathrm{sel}}\big)^{K-1-n},
\end{align}
where $C_t^{(-k)}$ counts the other UEs selecting the $t$th slot. Averaging over the selected slot of the $k$th UE yields the collision-resolvability probability, which is given by
\begin{align}\label{eq_partialSM_pcol_avg}
p_{\mathrm{col}}^{(R)}
&= \sum_{t=1}^{N_{\mathrm{ac}}^{(R)}} p_t^{\mathrm{sel}} p_{\mathrm{col},t}^{(R)} \notag\\
&= \sum_{t=1}^{N_{\mathrm{ac}}^{(R)}} p_t^{\mathrm{sel}}
\sum_{n=0}^{R-1}\binom{\!K\!-\!1\!}{n}
\big(p_t^{\mathrm{sel}}\big)^n
\big(1-p_t^{\mathrm{sel}}\big)^{K-1-n}.
\end{align}

\begin{corollary}\label{cor:partialSM_balanced_scaling}
If the oracle-guided slot selection is approximately balanced, i.e., $p_t^{\mathrm{sel}}\simeq 1/N_{\mathrm{ac}}^{(R)}$, then
\begin{align}\label{eq_partialSM_pcol_balanced}
p_{\mathrm{col}}^{(R)}
\simeq
\sum_{n=0}^{R-1}\binom{\!K\!-\!1\!}{n}
\left(\frac{1}{N_{\mathrm{ac}}^{(R)}}\right)^n
\left(1-\frac{1}{N_{\mathrm{ac}}^{(R)}}\right)^{K-1-n}.
\end{align}
For large $M$ with $N_{\mathrm{ac}}^{(R)}\simeq M Q_{\mathrm{ac}}/R$, let $\beta\triangleq (K-1)/(M Q_{\mathrm{ac}})$. Then \eqref{eq_partialSM_pcol_balanced} admits the approximation
\begin{align}\label{eq_partialSM_poisson_scaling}
p_{\mathrm{col}}^{(R)}\approx \Pr\{Z\le R-1\},\qquad Z\sim\mathrm{Poisson}(\beta R).
\end{align}
Hence, increasing $R$ reduces the number of access slots while simultaneously increasing the resolvable occupancy per slot. This scaling identifies $\beta$ as the effective access load. Specifically, $R$-access is favorable when $\beta<1$, whereas $\beta>1$ indicates that increasing $R$ alone cannot compensate for insufficient aggregate access resources.
\end{corollary}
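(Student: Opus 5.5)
The plan is to prove the corollary in two steps. First, substitute the balanced selection law into the exact collision-resolvability expression \eqref{eq_partialSM_pcol_avg}. Second, pass to a Poisson limit of the resulting binomial tail.

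\emph{Step 1.} Set $p_t^{\mathrm{sel}}=1/N_{\mathrm{ac}}^{(R)}$ for every $t$. Then the inner sum $p_{\mathrm{col},t}^{(R)}$ in \eqref{eq_partialSM_pcol_cond} no longer depends on $t$. The outer average is then $\sum_t p_t^{\mathrm{sel}}\cdot(\text{const})=\text{const}$, because $\sum_t p_t^{\mathrm{sel}}=1$. This gives \eqref{eq_partialSM_pcol_balanced} directly. It equals $\Pr\{C\le R-1\}$ with $C\sim\mathrm{Binomial}(K-1,1/N_{\mathrm{ac}}^{(R)})$. If $p_t^{\mathrm{sel}}$ is only approximately uniform, say $p_t^{\mathrm{sel}}=(1+\epsilon_t)/N_{\mathrm{ac}}^{(R)}$, I would note that the binomial CDF is continuous in the success probability. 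The $\simeq$ then holds to first order in $\max_t|\epsilon_t|$.

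\emph{Step 2.} Use the slot count \eqref{eq_partialSM_Nac_scaling}, $N_{\mathrm{ac}}^{(R)}=Q_{\mathrm{ac}}\lceil M/R\rceil$. For large $M$ the ceiling effect is $O(R/M)$ relative, so $N_{\mathrm{ac}}^{(R)}\simeq MQ_{\mathrm{ac}}/R$. The binomial mean is therefore
\begin{align}
\frac{K-1}{N_{\mathrm{ac}}^{(R)}}\simeq\frac{(K-1)R}{MQ_{\mathrm{ac}}}=\beta R .
\end{align}
Now apply the classical Poisson limit theorem: $\mathrm{Binomial}(n,p)\to\mathrm{Poisson}(np)$ when $n\to\infty$ and $p\to 0$ with $np$ fixed. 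Here $n=K-1$ and $p=1/N_{\mathrm{ac}}^{(R)}\to0$ as $M$ grows, with $K$ scaling proportionally so that $\beta$ stays fixed. For a quantitative statement, Le Cam's inequality bounds the total-variation distance by $(K-1)p^2=\beta R/N_{\mathrm{ac}}^{(R)}$. This vanishes as $M\to\infty$ and so controls the CDF at $R-1$, which yields \eqref{eq_partialSM_poisson_scaling}.

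\emph{Step 3.} The interpretation follows from the Poisson form. Let $Z\sim\mathrm{Poisson}(\beta R)$, whose mean-to-threshold ratio is $\beta R/R=\beta$.
\begin{itemize}
\item If $\beta<1$, the mean is below the threshold $R$. By Chernoff bounds, $\Pr\{Z\ge R\}\le e^{-R\,h(\beta)}$ with $h(\beta)=\beta-1-\ln\beta>0$, so resolvability tends to $1$ as $R$ grows.
\item If $\beta>1$, the same bound applied to the lower tail shows that $\Pr\{Z\le R-1\}\to0$ as $R$ grows.
\end{itemize}
So increasing $R$ cannot compensate when $\beta>1$.

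The main obstacle is making ``approximately balanced'' and ``large $M$'' precise. I would state the result in the limit regime $M\to\infty$ with $K/(MQ_{\mathrm{ac}})$ fixed and $R$ fixed. The remaining errors are then bounded by the ceiling correction and the Le Cam term.
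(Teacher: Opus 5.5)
Your proposal is correct and takes essentially the same route the paper intends, although the paper gives no written proof. The paper substitutes $p_t^{\mathrm{sel}}=1/N_{\mathrm{ac}}^{(R)}$ into \eqref{eq_partialSM_pcol_avg}, so the average collapses to a single $\mathrm{Binomial}(K-1,1/N_{\mathrm{ac}}^{(R)})$ CDF, and then replaces it by a Poisson law with mean $(K-1)R/(MQ_{\mathrm{ac}})=\beta R$. Your Le Cam bound, ceiling correction, and Chernoff exponent $h(\beta)=\beta-1-\ln\beta$ make quantitative what the paper only asserts, including the $\beta<1$ versus $\beta>1$ dichotomy.
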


Under the sufficient-power condition in \eqref{eq_partialSM_power_suff}, Proposition~\ref{prop:partialSM_exist_slot} eliminates coverage outage, and the remaining access failure is dominated by the event that more than $R$ UEs select the same slot, i.e., $C_t>R$. Accordingly, \eqref{eq_partialSM_pcol_avg} characterizes the contention-limited reliability of the $R$-access design. When both coverage and contention are considered, the access success probability of the $k$th UE can be approximated by
\begin{align}\label{eq_partialSM_success_general}
p_k^{\mathrm{succ},(R)} \approx p_k^{\mathrm{cov},(R)}\, p_{\mathrm{col}}^{(R)},
\end{align}
where $p_k^{\mathrm{cov},(R)}=1$ under the sufficient-power condition.

The corresponding expected number of successfully admitted UEs is given by
\begin{align}\label{eq_partialSM_EKa}
\mathbb E[K_a^{(R)}]
= \sum_{t=1}^{N_{\mathrm{ac}}^{(R)}} \sum_{n=1}^{R} n\binom{K}{n}\big(p_t^{\mathrm{sel}}\big)^n\big(1-p_t^{\mathrm{sel}}\big)^{K-n},
\end{align}
which is then substituted into the throughput metric in \eqref{eq_TP_def}. Therefore, unlike SA, whose main design variable is the grouping size $S$, $R$-access is shaped by the number of RF chains $R$ and the oracle-guided slot-selection distribution $\{p_t^{\mathrm{sel}}\}$. Increasing $R$ raises the resolvable occupancy of each slot, whereas balancing $\{p_t^{\mathrm{sel}}\}$ across slots reduces overload concentration.
\begin{figure*}[!t]
\centering
\subfigure[Oracle bound versus $Q_{\mathrm{co}}$, where $\rho_a=10$~dBm.]{
\includegraphics[width=0.28\textwidth]{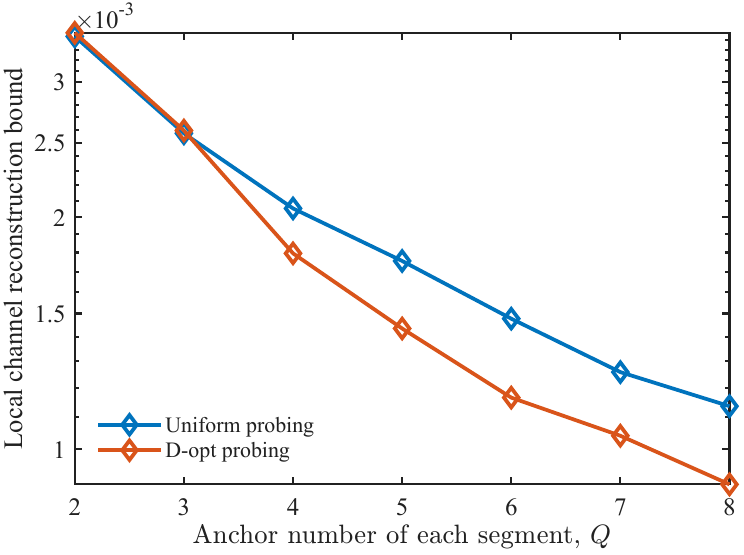}
}
% \hfill
\hspace{0.008\textwidth}
\subfigure[SA outage versus $Q_{\mathrm{ac}}$, where $S=8$, $\rho_k=0$~dBm, and $\gamma_{\mathrm{ac}}=5$~dB.]{
\includegraphics[width=0.28\textwidth]{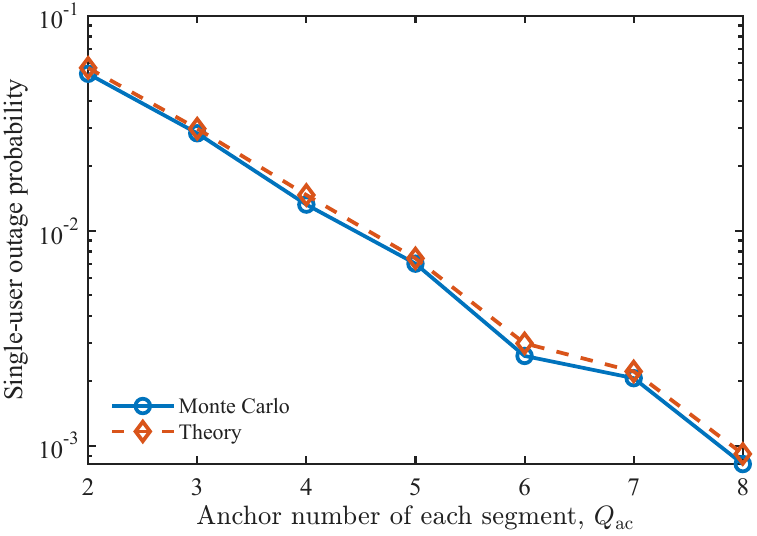}
}
% \hfill
\hspace{0.008\textwidth}
\subfigure[SA outage versus $S$, where $Q_{\mathrm{ac}}=2$.]{
\includegraphics[width=0.28\textwidth]{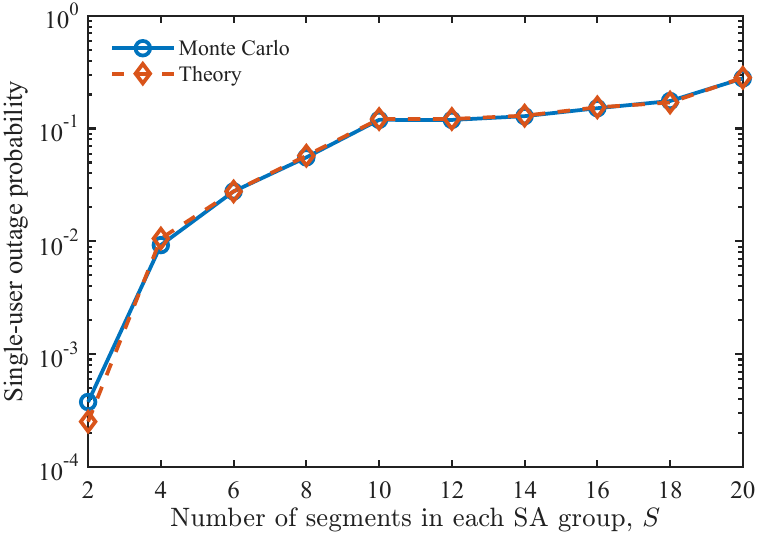}
}
\caption{Oracle quality and single-user SA reliability under the common system configuration.}
\label{fig_oracle_sa_joint}
\vspace{-12pt}
\end{figure*}

\subsection{Design Implications of SA and $R$-Access}
The preceding analysis shows that SA and $R$-access are suited to different operating regimes. In SA, a single RF chain supports an access period of length $N_{\mathrm{ac}}=Q_{\mathrm{ac}}\lceil M/S\rceil$, while the balanced group-level contention scales approximately as $\frac{1}{2}(K^2/G-K)$ with $G=\lceil M/S\rceil$ according to \eqref{eq_pair_collision_balanced}. Hence, SA is attractive when hardware cost, AP transmit power consumption, or implementation simplicity is the primary constraint. In this regime, $S$ should be decreased under heavier access load to create more contention-splitting groups, whereas larger $S$ is preferable under lighter load or stricter latency constraints because it reduces the access duration and preserves analog aggregation.

In contrast, $R$-access moves beyond randomized contention splitting toward spatially resolvable access. By allocating multiple RF chains to each access slot, it shortens the configuration schedule, provides deterministic coverage through \eqref{eq_partialSM_power_suff}, and increases the number of UEs that can be supported in the same slot. Therefore, $R$-access is more suitable for reliability-oriented or latency-sensitive deployments with moderate or high access load.
% provided that the RF chain budget is acceptable. 
Moreover, the effective load factor $\beta$ further shows that multiple RF chains cannot compensate for an insufficient access resource. When $\beta$ is large, the system should enlarge the access codebook or aperture of waveguides rather than only increasing $R$.

\section{Numerical Results}\label{sec_numerical_results}

This section provides representative numerical results for both the oracle and access modules of the proposed framework. The simulations are designed to (i) validate the sampling strategy used in the channel-oracle stage, (ii) verify that the analytical SA model captures the access reliability behavior, and (iii) quantify the architecture-level throughput tradeoff between the single-RF SA scheme and the multi-RF $R$-access scheme.

\subsection{Simulation Setup}

We consider a SWAN deployed over a rectangular service region of size $D_x\times D_y=60~\mathrm{m}\times 10~\mathrm{m}$ with $M=20$ segments and segment length $L=3$~m. The waveguide is placed at $\psi_{\mathrm w}=0$, and the deployment height is set to $h=5$~m. Each segment contains $P=30$ uniformly spaced candidate PA positions. The carrier frequency is $f_{\mathrm c}=30$~GHz, the guided wavelength is $\lambda_{\mathrm g}=\lambda/1.44$, and the in-waveguide attenuation coefficient is $\kappa=0.1$~dB/m. The noise power is fixed at $\sigma^2=-90$~dBm, and UE locations are independently drawn from the uniform distribution over $\mathcal U$.

For the oracle stage, the AP transmits pilots with power $\rho_a=10$~dBm, the oracle pilot length is $L_{\mathrm{co}}=14$, and the oracle sampling size is varied over $Q_{\mathrm{co}}\in\{2,\ldots,8\}$ when the oracle overhead is evaluated. For the access stage, unless otherwise stated, the transmit power of the $k$th UE is $\rho_k=0$~dBm and the decoding threshold is $\gamma_{\mathrm{ac}}=5$~dB. The protocol timing parameters are normalized by the symbol duration and set to $T_{\mathrm{symb}}=1$, $T_{\mathrm s}=20$, $T_{\mathrm{sw}}=0.2$, and $T_{\mathrm F}=T_{\mathrm s}$, respectively. The oracle-overhead penalty in \eqref{eq_Tperiod_def} is $\alpha=1$ for overall throughput evaluations. All Monte Carlo averages are obtained from independently generated UE locations, with the number of realizations specified by the corresponding experiment.

\subsection{Oracle Sampling Accuracy and SA Reliability}
Fig.~\ref{fig_oracle_sa_joint} illustrates the oracle-quality and the single-user SA reliability performance. In Fig.~\ref{fig_oracle_sa_joint}(a), the oracle quality improves monotonically with the oracle sampling size $Q_{\mathrm{co}}$, and the proposed D-optimal sampling design consistently outperforms uniform sampling under the same overhead. Figs.~\ref{fig_oracle_sa_joint}(b) and (c) then examine SA outage for a single UE by assuming accurate reconstructed channel responses, so that oracle error and multiuser collisions are excluded. In Fig.~\ref{fig_oracle_sa_joint}(b), the outage decreases with $Q_{\mathrm{ac}}$, which verifies the outage characterization in \eqref{eq_Fkg_marcum} and \eqref{eq_out_prod}. In Fig.~\ref{fig_oracle_sa_joint}(c), the outage increases with $S$, which indicates that overly large SA groups are undesirable in the single-RF architecture.

These observations lead to the following design implication. Increasing $Q_{\mathrm{co}}$ and optimizing the sampled configurations improve channel predictability, while SA reliability is improved more effectively by anchor densification than by excessive segment aggregation. The reason is that increasing $S$ simultaneously intensifies phase cancellation in the analog coherent sum, raises the effective threshold through $\Gamma_{k,g}=\gamma_{\mathrm{ac}}|\mathcal S_g|\sigma^2/\rho_k$, and reduces the number of contention groups available within one access period. Therefore, the preferred operating region is characterized by geometry-aware oracle sampling, moderate anchor densification, and a conservative SA group size, rather than blindly activating as many segments as possible.

\subsection{Expected Overall Throughput versus Oracle Sampling Size}
\begin{figure}[!t]
\centering
\includegraphics[width=0.4\textwidth]{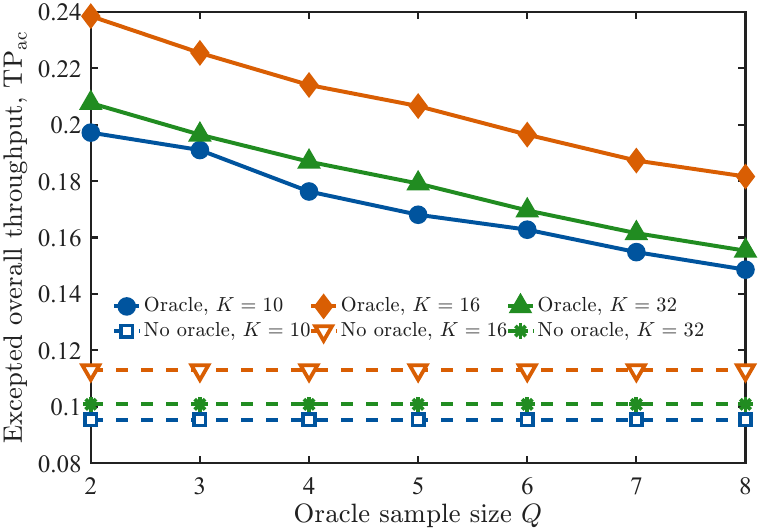}
\caption{Expected overall throughput versus the oracle sampling size $Q_{\mathrm{co}}$ for the oracle-assisted SA protocol and a direct random-SA baseline without channel oracle, with $S=4$, $Q_{\mathrm{ac}}=4$, $\rho_a=10$~dBm, and $\alpha=1$.}
\label{fig_protocol_tp_q}
\vspace{-5pt}
\end{figure}
Fig.~\ref{fig_protocol_tp_q} shows the expected overall throughput versus the oracle sampling size $Q_{\mathrm{co}}$. For all considered values of $K$, the oracle-assisted SA scheme consistently outperforms its no-oracle counterpart, which demonstrates the benefit of incorporating the oracle into the protocol design. For the oracle-assisted scheme, the throughput decreases as $Q_{\mathrm{co}}$ increases, because a larger oracle sampling size mainly prolongs the oracle stage in the considered regime, while bringing only marginal additional access gains. In contrast, the no-oracle baselines are independent of $Q_{\mathrm{co}}$ since they do not involve oracle acquisition. Moreover, as $K$ increases moderately, the throughput improves due to better slot utilization. Once $K$ becomes too large, however, collisions dominate and the throughput decreases.
% Fig.~\ref{fig_protocol_tp_q} shows the expected overall throughput versus the oracle sample size $Q$. For all three values of $K$, the oracle-assisted SA scheme consistently outperforms the corresponding no-oracle baseline, which demonstrates the superiority of the proposed protocol. Moreover, the oracle-assisted throughput decreases with $Q$, since enlarging the oracle sample size mainly increases the oracle duration in the considered setting, while providing only limited additional access gain. By contrast, the no-oracle baselines remain flat because they bypass the oracle stage entirely. Finally, as $K$ increases moderately, the throughput improves because more access opportunities are utilized. However, once $K$ becomes too large, collisions become dominant and the throughput starts to decrease.

\vspace{-5pt}
\subsection{Collision-Aware SA Throughput versus $K$}

\begin{figure}[!t]
\centering
\includegraphics[width=0.4\textwidth]{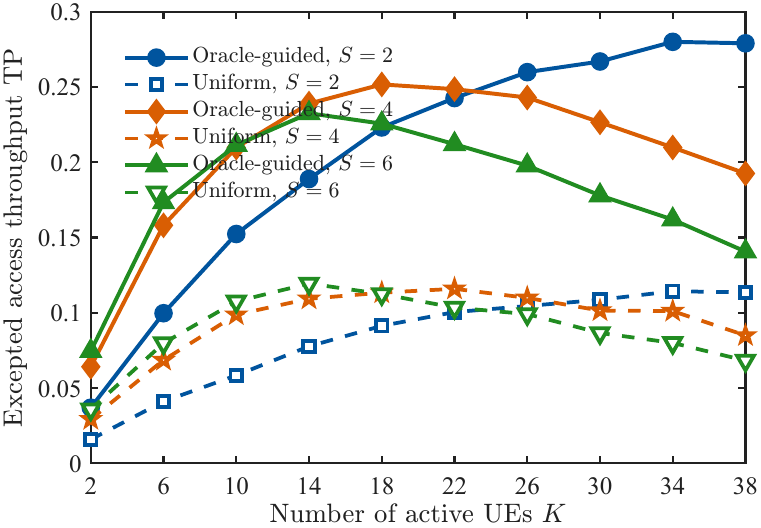}
\caption{Expected access throughput versus the number of active UEs $K$, for oracle-guided and uniform SA-based access scheme with $S\in\{2,4,6\}$.}
\label{fig_sa_throughput_a}
\vspace{-5pt}
\end{figure}

Fig.~\ref{fig_sa_throughput_a} evaluates the collision-aware access throughput of the SA-based access scheme under different group sizes. Compared with uniform slot selection, oracle-guided access improves throughput by steering UEs toward groups with higher predicted SA success probability, rather than treating all groups as equally useful. The resulting throughput is non-monotonic in $K$. It first increases as more access slots are utilized, but eventually decreases when repeated slot conflicts dominate and reduce the number of successful transmissions.
% The curves also exhibit the characteristic ``first-rise-then-fall'' behavior of random access: as $A$ increases, successful slot occupancy first improves and is then overtaken by repeated slot conflicts.

These trends verify the SA design tradeoff characterized in Section~\ref{subsec_det_guarantees}. Decreasing $S$ increases the number of groups and reduces group-level contention as predicted by \eqref{eq_pair_collision_balanced}, but it also reduces the number of segments participating in each analog aggregation. Increasing $S$ has the opposite effect. It may improve aggregation at light load, but reduces contention diversity and makes the single-RF aggregation penalty more pronounced as $K$ grows. The simulation therefore supports the design guideline that SA-based access scheme should employ oracle-guided protocol with a moderate group size rather than aggressively enlarging the aggregation group.
\vspace{-5pt}
\subsection{SA versus $R$-Access Throughput Comparison}

\begin{figure}[!t]
\centering
\includegraphics[width=0.4\textwidth]{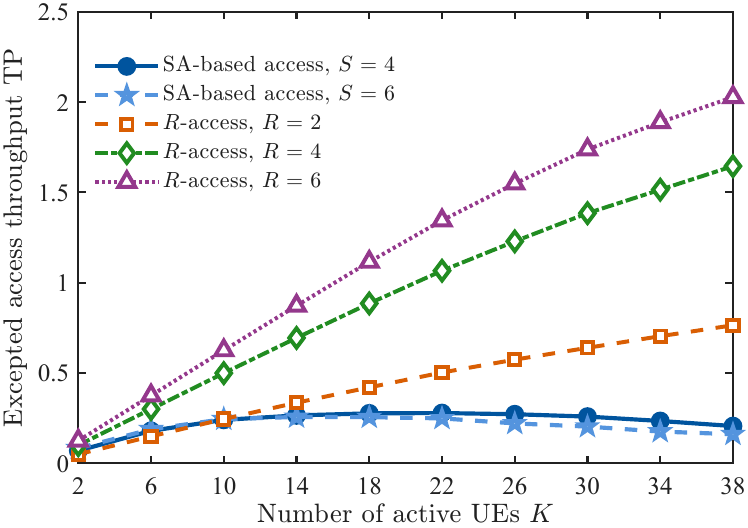}
\caption{Expected access throughput versus the number of active UEs $K$, for the SA scheme with $S\in\{4,6\}$ and the $R$-access scheme with $R\in\{2,4,6\}$.}
\label{fig_sa_vs_raccess_tp}
\vspace{-5pt}
\end{figure}

Fig.~\ref{fig_sa_vs_raccess_tp} reveals a clear architectural separation between the two access mechanisms. Both SA curves are well below the multi-RF $R$-access curves, and the gap widens as $R$ increases, because each SA slot collapses all active segment observations into one analog stream and can resolve at most one successful access event. Accordingly, the SA throughput rises only in the low-load regime and then saturates once the finite group/slot diversity is exhausted, whereas $R$-access preserves multiple parallel RF observations and keeps converting additional offered load into resolved accesses.

The two SA curves also provide a useful intra-architecture insight. A larger group size can offer a slight benefit at very low load, since more segments participate in each analog aggregation. However, the $S=6$ curve degrades faster than the $S=4$ curve once the load becomes moderate, because increasing $S$ simultaneously reduces the number of available contention groups and strengthens the phase-cancellation/noise-threshold penalty already observed in Fig.~\ref{fig_oracle_sa_joint}(c). Therefore, Fig.~\ref{fig_sa_vs_raccess_tp} shows both that SA is throughput-inferior to $R$-access and that SA itself should avoid aggressively large group sizes when multiuser load is the dominant concern. Meanwhile, SA remains attractive when low RF complexity, low power consumption, and simplified implementation are more important than maximizing throughput.

\vspace{-5pt}
\subsection{$R$-Access Coverage Reliability versus Transmit Power}
\begin{figure}[!t]
\centering
\includegraphics[width=0.4\textwidth]{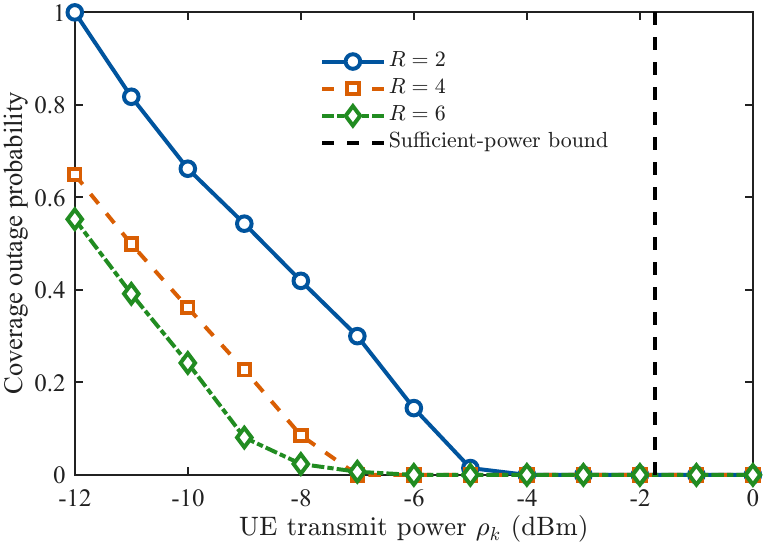}
\caption{Coverage outage probability versus the UE transmit power for the $R$-access scheme with $R\in\{2,4,6\}$. The dashed vertical line denotes the sufficient-power bound implied by Proposition~\ref{prop:partialSM_exist_slot}.}
\label{fig_raccess_guarantee_qac}
\vspace{-8pt}
\end{figure}
Fig.~\ref{fig_raccess_guarantee_qac} provides a more informative view of the deterministic coverage behavior of $R$-access. Instead of plotting a nearly flat best-slot SNR curve against $Q_{\mathrm{ac}}$, the figure directly shows the coverage outage probability as a function of the UE transmit power. A clear monotonic trend is observed for all tested RF chain budgets. As $\rho_k$ increases, the outage probability decreases rapidly and eventually approaches zero. Moreover, increasing $R$ shifts the outage curve to the left, which means that the same target reliability can be achieved with a lower transmit power when more RF chains are available. This behavior is consistent with the strict energy additivity of $R$-access, since a larger $R$ provides more simultaneous digital combining branches and thereby increases the best-slot received energy available to each UE.

The dashed vertical line marks the sufficient-power threshold obtained from \eqref{eq_partialSM_power_suff}. The simulation shows that this bound is conservative yet practically meaningful. Once the transmit power approaches this threshold, the coverage outage becomes negligible for all tested values of $R$, while for larger $R$ the outage decays even earlier. Therefore, the figure not only verifies the theoretical guarantee in Proposition~\ref{prop:partialSM_exist_slot}, but also reveals a concrete design trend that was not visible in the previous $Q_{\mathrm{ac}}$-based plot, namely that multi-RF $R$-access improves coverage robustness primarily by reducing the transmit-power requirement needed to reach near-zero outage.

\vspace{-5pt}
\section{Conclusion}\label{sec_conclusion}
In this paper, an uplink random access protocol was investigated for SWANs under the three canonical operating modes. It was shown that the SWAN configuration-dependent channel variations can be exploited as an access resource for random access design. Based on this insight, an oracle-assisted protocol was developed.
%where sparse OS-based pilot observations were used to infer the configuration-dependent channel responses and guide subsequent uplink access decisions.
For the oracle stage, we formulated a geometry-based channel reconstruction method, and derived an FIM-based characterization that relates the oracle sampling design to the channel reconstruction accuracy. For the access stage, an SA-based scheme was developed for low-complexity single-RF operation, where randomized segment aggregation provides statistical access diversity. Meanwhile, an SM-based $R$-access scheme was designed for scenarios requiring stronger access controllability and multiuser resolution, where a geometry-aware lower bound was established to guarantee feasible access slots under the proposed codebook.
Numerical results confirmed that the proposed framework effectively exploits SWAN reconfigurability for uplink random access. Compared with conventional random access over a fixed propagation environment, the proposed protocol leverages configuration-dependent propagation variations to improve access reliability and throughput.

\vspace{-5pt}
\bibliographystyle{IEEEtran} 
\bibliography{reference}    

\end{document}